\definecolor{bluekeywords}{rgb}{0.13, 0.13, 1}
\definecolor{greentypes}{rgb}{0, 0.5, 0}
\definecolor{orangecomments}{rgb}{1, 0.5, 0.1}
\definecolor{redstrings}{RGB}{171, 114, 2}
\definecolor{graynumbers}{rgb}{0.5, 0.5, 0.5}
\definecolor{goldcomments}{rgb}{0.6, 0.4, 0.08}
\definecolor{implemented}{rgb}{0.67, 0.9, 0.93}
\colorlet{existing}{lightgray}
\lstdefinelanguage{Lola}{
  keywords=[0]{input, output, trigger, constant, import, spawn, eval, close, with, when, schedule},
  keywordstyle=[0]\bfseries\color{bluekeywords},
  keywords=[1]{if, then, else, aggregate, defaults, offset, by, or, to, sin, cos, abs, hold, over, using, min, max, in},
  keywords=[2]{Variable, String, Int, Int8, Int64, UInt, UInt8, UInt64, Bool, Float32, Float64, Float, @1Hz, @5Hz, @10Hz, @100mHz, @1kHz, @1min},
  keywordstyle=[2]\color{greentypes},
  moredelim=[is][\textcolor{greentypes}]{|}{|},
  moredelim=[s][\color{teal}\bfseries]{\#[}{]},
  moredelim=[s][\color{teal}\bfseries]{\#![}{]},
  sensitive=false,
  comment=[l]{//},
  morecomment=[s]{/*}{*/},
  morestring=[b]',
  morestring=[b]"
}
\newcommand{\rtlola}{RTLola\xspace}
\newcommand{\stream}{\text{Stream}}
\newcommand{\streammap}{\text{StreamMap}}
\renewcommand{\time}{\text{Time}}
\newcommand{\timemap}{\text{TimeMap}}
\newcommand{\World}{\mathbb{W}}
\newcommand{\world}{\omega}
\newcommand{\RR}{\mathbb{R}}
\newcommand{\VV}{\mathbb{V}}
\newcommand{\NN}{\mathbb{N}}
\newcommand{\BB}{\mathbb{B}}
\newcommand{\inputref}{\mathtt{ID}^\uparrow}
\newcommand{\outputref}{\mathtt{ID}^\downarrow}
\newcommand{\sid}{\mathit{sid}}
\newcommand{\condition}{\text{Cond}}
\newcommand{\Sref}{\mathtt{ID}}
\newcommand{\sref}{\mathit{sref}}
\newcommand{\schedulingDecision}{\mathbb{S}}
\newcommand{\scheduleYes}{\text{Y}}
\newcommand{\scheduleMaybe}{\text{M}}
\newcommand{\scheduleNo}{\text{N}}
\newcommand{\Tasks}{\text{Tasks}}
\newcommand{\task}{\tau}
\newcommand{\tasks}{\mathcal{T}}
\newcommand{\Schedule}{S}
\newcommand{\overdue}{\mathit{overdue}}
\newcommand{\IntervalSchedule}{\text{Deadline}}
\newcommand{\PrioritySchedule}{\text{Priority}}
\newcommand{\staticSchedule}{\xi}
\newcommand{\dynamicSchedule}{\psi}
\newcommand{\taskSat}[3]{#2 \models_{#3} #1}
\newcommand{\taskUnSat}[3]{#2 \not\models_{#3} #1}
\newcommand{\lastPrio}{\mathit{Prio}}
\DeclareMathOperator*{\argmax}{arg\,max}
\newcommand{\prefix}{\text{prefix}}
\newcommand{\SA}{\text{SA}}
\newcommand{\validTasks}{\mathit{validTasks}}
\newcommand{\simpleBound}[1]{\text{InputEventBound}_{#1}}
\newcommand{\inputValues}{\text{InputValues}}
\newcommand{\takeNextEvent}{\mathit{takeEvent}}
\newcommand{\set}[1]{\left\{#1\right\}}
\newcommand{\setCond}[2]{\mleft\{#1\;\middle|\;#2\mright\}}
\newcommand{\evalTo}[2]{\Downarrow^{#1}_{#2}}
\newcommand{\semantics}[1]{\llbracket#1\rrbracket}
\newcommand{\powerset}[1]{\mathcal{P}\left(#1\right)}
\renewcommand{\iff}{\mathbin{\mathrm{iff}}}
\begin{document}
\title{Active Monitoring with RTLola:\\A Specification-Guided Scheduling Approach}
\titlerunning{Active Monitoring with RTLola}
\author{
  Jan Baumeister\orcidlink{0000-0002-8891-7483}
  \and Bernd Finkbeiner\orcidlink{0000-0002-4280-8441}
  \and Frederik Scheerer\orcidlink{0009-0007-8115-0359}
}
\authorrunning{Baumeister et~al.}

\institute{CISPA Helmholtz Center for Information Security,\\Saarbrücken, Germany\newline
\email{\{jan.baumeister, finkbeiner, frederik.scheerer\}@cispa.de}}
 \maketitle
 \begin{abstract}

  Stream-based monitoring is a well-established runtime verification approach which relates input streams, representing sensor readings from the monitored system, with output streams that capture filtered or aggregated results.
  In such approaches, the monitor is a passive external component that continuously receives sensor data from the system under observation.
  This setup assumes that the system dictates what data is sent and when, regardless of the monitor's current needs.
  However, in many applications -- particularly in resource-constrained environments like autonomous aircraft, where energy, size, or weight are limited -- this can lead to inefficient use of communication resources.
  We propose making the monitor an active component that decides, based on its current internal state, which sensors to query and how often.
  This behavior is driven by scheduling annotations in the specification, which guide the dynamic allocation of bandwidth towards the most relevant data, thereby improving monitoring efficiency.
  We demonstrate our approach using the stream-based specification language RTLola and asses the performance by monitoring a specification from the aerospace domain.
  With equal bandwidth usage, our approach detects specification violations significantly sooner than monitors sampling all inputs at a fixed frequency.

 \keywords{Stream-based Monitoring \and Constraint-Based Scheduling \and Real-time Properties}
 \end{abstract}
\section{Introduction}

Cyber-physical systems are increasingly prevalent, and many now operate fully autonomously in complex, real-world environments.
These systems are often deployed in safety-critical domains, such as autonomous vehicles or drones, where incorrect behavior can lead to catastrophic outcomes.
Runtime monitoring is a well-established technique for checking the system's behavior at runtime against a formal specification and, therefore, ensuring its correct behavior~\cite{DBLP:series/natosec/FalconeHR13,DBLP:conf/vstte/HavelundG05,DBLP:journals/jlp/LeuckerS09}.

A prominent class of runtime monitoring frameworks is stream-based monitoring~\cite{baumeister2024tutorial,kallwies2022tessla,gorostiaga2018striver}.
There, the system continuously supplies a monitor with data about its current state via input streams.
Output streams compute new values by aggregating and filtering the inputs, often leveraging temporal operators to express rich, time-dependent specifications.
This makes stream-based monitoring particularly suitable for complex safety requirements of cyber-physical systems.

However, in modern autonomous systems, monitors must process data from a diverse set of sources in real-time.
For example, an autonomous vehicle or drone might rely on GPS, camera feeds, and many other sensors to assess its environment and maintain safety.
Here, the monitor is a passive component, consuming the updates about the current state of the system under observation.
A key challenge in such settings is limited bandwidth: the connection between sensors and the monitor cannot support arbitrarily high-frequency updates from all sources due to factors like energy consumption, physical size, or weight constraints.
Consequently, it is not feasible to transmit all sensor data at its highest possible frequency.
However, reducing the sampling frequency leads to increased latency in detecting critical events and dangerous situations, undermining the effectiveness of the monitor.

We propose that, since the monitor has a comprehensive view of all received data, it has a better understanding of which data is currently needed than the different sources individually.
For instance, consider a drone equipped with both an altitude sensor and a camera.
When the drone is landed, altitude readings are irrelevant, while the ground-level camera might be more important for monitoring the surrounding ground.
Conversely, at higher altitudes, the camera data becomes less informative, whereas altitude measurements become more significant as the drone is approaching predefined upper altitude limits.
We present an active monitoring approach in which the system adaptively queries the sensors based on the drone's current altitude.
This enables dynamic prioritization of sensor data, emphasizing ground-level visual input when near the ground and focusing on altitude measurements when high up in the sky -- therefore making more effective use of the limited bandwidth between the sensors and the monitor.

Such an approach is especially interesting for applications that are designed for actively requesting individual sensors' data.
A prominent example is the OBD-II interface present in all modern vehicles, which must be explicitly queried for individual sensors of the car, instead of receiving a stream of new sensor data on its own.

Given the availability of well-established monitoring tools~\cite{baumeister2024tutorial,kallwies2022tessla,DBLP:conf/fm/PerezGD24,DBLP:conf/rv/RozierS17}, we chose not to develop a new monitoring framework from scratch.
Instead, our approach builds on existing monitoring tools by introducing a scheduling component that interfaces between the monitor and the sensors.
This scheduler observes the current monitoring state and determines which sensors to query next.
Once the selected data has been acquired, the scheduler forwards it to the underlying monitor to initiate the next monitoring cycle.
This architecture allows our method to remain independent of the monitor implementation and be compatible with a wide range of existing monitoring infrastructures.

We demonstrate our approach through an implementation based on the stream-based monitoring language \rtlola~\cite{baumeister2024tutorial}, which has previously been successfully applied to the monitoring of unmanned aircraft~\cite{DBLP:conf/cav/BaumeisterFKLMST24,DBLP:conf/cav/BaumeisterFSST20} or for increasing trust in automatic decision and prediction systems~\cite{baumeister_stream-based_2025}.
We extend \rtlola by introducing scheduling annotations that can be attached to individual streams in the specification.
These annotations guide the scheduling process by indicating which inputs should be prioritized under certain conditions of the monitor.
The annotated specification is then automatically transformed into a "regular" \rtlola specification compatible with all existing \rtlola implementations~\cite{BCFS25,DBLP:conf/cav/FaymonvilleFSSS19,DBLP:journals/tecs/BaumeisterFST19}.
The schedule is embedded into this transformed specification through additional output streams, which are then interpreted by the scheduler to determine which inputs should be queried in the next cycle.

We evaluated our approach using data obtained through the Microsoft AirSim simulator.
Our results demonstrate that the proposed method can significantly reduce bandwidth consumption without compromising monitoring quality, or, on the other hand, detect violations earlier compared to a fixed-frequency approach, while utilizing the same amount of bandwidth.

\paragraph{Contribution.}
To summarize, we make the following contributions:
\begin{itemize}
    \item We define a formal semantics for stream-based specification languages that incorporate scheduling of streams,
    \item We extend \rtlola with an annotation mechanism to express scheduling constraints directly within the specification,
    \item We present a translation from annotated specifications to regular ones, which integrate the scheduling information as additional streams, and
    \item We implement our approach on top of \rtlola and evaluate it in an online setting using simulated drone data.
\end{itemize}

The remainder of this paper is structured as follows:
In \Cref{sec:rtlola}, we provide a background on the \rtlola monitoring language and define formal semantics for stream-based specification languages.
In \Cref{sec:scheduling_semantics}, we extend these semantics to incorporate the scheduling mechanism.
\Cref{sec:rtlola_scheduling} introduces our active monitoring framework and discusses challenges specific to integrating scheduling into \rtlola.
Finally, we present our evaluation in \Cref{sec:evaluation}.

\subsection{Related Work}

Many monitoring frameworks assume that every change in the system is observable by the monitor.
However, this assumption does not hold in bandwidth-constrained environments, where the monitor must instead periodically sample the system state.
Bonakdarpour et al.~\cite{DBLP:conf/fm/BonakdarpourNF11} propose a time-driven monitoring approach that samples the system at a fixed frequency to reduce overhead.
This frequency is statically determined to guarantee that no violations are missed, but it treats all inputs equally throughout execution.
Navabpour et al.~\cite{DBLP:conf/rv/NavabpourBF12} extend this work by making the sampling path-aware, allowing the frequency to vary over time. 
Stoller et al.~\cite{DBLP:conf/rv/StollerBSGHSZ11} also aim to reduce monitoring overhead by sampling the system periodically and use Hidden Markov Models to estimate unobserved states between samples.
In contrast to the previous approaches which sample the system, our setting allows explicitly querying individual sensor, giving us the potential to prioritze inputs according to their importance.
Huang et al.~\cite{DBLP:journals/sttt/HuangSCDGSSZ12} take a different approach by dynamically adjusting the amount of monitoring to stay within a user-specified target overhead.
Instead of sampling, they temporarily disable the monitoring of certain events when the limit is exceeded otherwise.

Our approach is designed for the stream-based specification language \rtlola~\cite{baumeister2024tutorial}, a successor of the synchronous stream-based specification language Lola~\cite{d2005lola}.
\rtlola has been successfully applied to the monitoring of cyber-physical systems, such as unmanned aircraft~\cite{DBLP:conf/cav/BaumeisterFKLMST24,DBLP:conf/cav/BaumeisterFSST20}.
However, our approach is also transferable to other asynchronous stream-based languages, including Tessla~\cite{kallwies2022tessla} and Striver~\cite{gorostiaga2018striver}.
While there exist stream-based approaches that utilize annotations in specifications for correctness guarantees~\cite{DBLP:journals/sttt/BaumeisterDFS23,DBLP:conf/fmcad/HagenT08}, we use annotations to represent timing constraints.

Another relevant area of research is scheduling in real-time systems, assigning tasks to processors -- a topic comprehensively overviewed by Buttazzo~\cite{buttazzo2011hard} -- and various approaches also consider bandwidth constraints (e.g., \cite{DBLP:conf/IEEEpact/XuWY10,DBLP:journals/corr/abs-2010-16058,DBLP:conf/sies/AfsharBBN15,DBLP:phd/dnb/Agrawal19}).
In this paper, we adapted these concepts to stream-based languages, by defining the scheduling semantics in the context of stream-based monitoring and integrating them into the specification language.
While scheduling of real-time systems defines dependencies between tasks, stream-based settings provide a different kind of dependency constraints, relating jointly evaluated tasks.
Our approach is closest to non-preemptive scheduling with dynamic priorities and hard aperiodic deadlines.
\section{RTLola}
\label{sec:rtlola}

Stream-based monitors operate over a set of \emph{streams}, each representing an infinite sequence of values.
\rtlola specifications define these streams through stream-equations, which describe how the values in the sequences are computed.
We distinguish between \emph{input streams}, which are populated with data from the monitored system, and \emph{output streams}, which compute new values by aggregating and filtering the inputs.

Consider the following example of an \rtlola specification:
\begin{lstlisting}
input alt : Float64
output alt_diff
    eval |@alt| with abs(alt - alt.offset(by:-1).defaults(to: 0.0))
trigger alt_diff > 10.0
\end{lstlisting}
The specification defines one input stream \lstinline!alt! which is automatically populated with new readings from the drone's altitude sensor as they arrive at the monitor.
Next, the specification defines the output stream \lstinline!alt_diff!, which computes the absolute difference between two consecutive altitude readings.
This is achieved using the offset operator, which allows access to past stream values.
Because such previous values are unavailable at the startup of the monitor, a default value must be provided that is used instead in this case.
Finally, a trigger is defined to check whether the altitude difference exceeds 10.
If the expression evaluates to true, the trigger activates, indicating a violation of the specification.

\subsection{Types}

Each stream in \rtlola has two associated types: a value type and a pacing type.
The \emph{value type} specifies the kind of data contained in the stream, such as integers, floating-point numbers, or booleans.
The \emph{pacing type} defines \emph{when} new values of a stream are computed and is indicated using the \lstinline!|@|!-symbol following the \lstinline!eval! keyword.
In this paper, all output streams are \emph{event-driven}, meaning they are evaluated whenever new inputs arrive.
For example, the \lstinline!alt_diff! stream always computes a new value whenever the \lstinline!alt! stream receives a new value, as indicated by its pacing type.

Next, consider the following extension of the previous specification:
\begin{lstlisting}
output num_high_alt
	eval |@alt| when alt > 20.0
          with num_high_alt.offset(by:-1).defaults(to: 0) + 1
\end{lstlisting}
This stream is evaluated whenever the \lstinline!alt! input stream receives a new value.
Here, the evaluation is further conditioned by the dynamic filter condition provided after the \lstinline!when! keyword: A new value is computed only if the condition evaluates to true.
In this case, the stream counts how often a high-altitude reading occurs.
\rtlola also allows multiple \lstinline!eval! clauses per stream.
Clauses are checked from top to bottom, and the stream is evaluated using the \lstinline!with! expression of the first clause whose \lstinline!when! condition is satisfied.

For a more detailed explanation of the \rtlola monitoring language -- including features such as time-driven streams, aggregations, and parametrization -- we refer the reader to the \rtlola Tutorial~\cite{baumeister2024tutorial}.

\subsection{Semantics}

We use the following semantics for stream-based monitors.
Although the formulation is applied to \rtlola, the semantics are transferable to accommodate other stream-based specification languages.
We define the semantics via an \emph{evaluation model} $\world \in \World$, which assigns stream references $\Sref$, consiting of input stream references $\inputref$ and output stream references $\outputref$, to a timed series of values.
We write $\world(t)$ to refer to the real-time timestamp at discrete timestep $t \in \time$, and $\world(\sid)(t)$ to refer to the value of the stream $\sid$ at time $t$.
If this stream did not calculate a new value at that time because of its pacing or filter conditions, $\bot$ is returned instead.

The semantics ensure that each output stream value in the evaluation model is correctly computed according to the specification's defining equations.
Formally, given an \rtlola specification $\varphi$, we define its semantics as the set:
\begin{align*}
	\semantics{\varphi} &= \big\{\world \in \World \;\mid\; \forall \sid \in \outputref. \forall t \in \time. \varphi(\sid) \evalTo{t}{w} \world(\sid)(t)\\
	&\hspace{2.1cm} \land \forall t \in \time. \world(t) < \world(t+1) \big\},
\end{align*}
In the formula, $\varphi(\sid) \evalTo{t}{w} v$ denotes that the defining stream equations $\varphi(\sid)$ of stream $\sid$ evaluate to $v$ at time $t$.
This evaluation yields the result of the \lstinline!with!-expression of the first \lstinline!eval! clause whose pacing and \lstinline!when! condition is satisfied at time $t$.
If no clause is evaluated, the result is $\bot$.


An evaluation model is considered valid with respect to a specification if, at every time step, all output stream values are correctly computed according to the defining stream equations and the time map is strictly monotonically increasing.

An \rtlola specification $\varphi$ is considered \emph{well-defined}, if for every possible input trace $I \in \time \rightarrow \inputValues$ with $\inputValues: \inputref \rightarrow \VV_\bot$, there exists a unique evaluation model $\world \in \semantics{\varphi}$ with $\forall t \in \time. \forall i \in \inputref. \world(i)(t) = I(t)(i)$.

\section{Scheduled Monitor Semantics}
\label{sec:scheduling_semantics}

This section introduces the general concept for stream-based scheduling, where a scheduler must dynamically adapt to the current state of the monitor.
We describe a valid schedule with a set of static constraints and later determine whether a scheduler satisfies them.
We start by defining dynamic schedule constraints to decide at each time point if the evaluation violates the constraint.
Then, we describe how to transform a static schedule into its dynamic counterpart -- the transformation later implemented by the scheduling component.

In our setting, the scheduler determines at each time step which tasks are evaluated in the next step.
The set of possible tasks defines the space of scheduling decisions available to the scheduler at each time step.
\begin{definition}[Task]
For each $\task \in \Tasks$ there exists:
\begin{enumerate*}
	\item a predicate $\taskSat{\task}{\world}{t}$ to dermine if the evaluation model $\world\in\World$ satisfies the task $\task$ at time $t$,
	\item a partial order $\preceq$ representing dependencies between tasks, and
	\item a predicate $iv  \models \tasks$ to determine if an input $iv: \inputValues$ reflects the set of tasks $\tasks$.
\end{enumerate*}
\end{definition}

To illustrate different choices for the task space, consider two examples.
\begin{example}[Individual Streams]
	For this example, consider a scheduler that selects individual streams during evaluation.
    Then $\Tasks = \Sref$ consisting of all stream's references and
	\begin{align*}
		\taskSat{\sref}{\world}{t} &\quad\iff\quad \world(\sref)(t) \ne \bot\\
        \task_1 \preceq \task_2\;&\quad\iff\quad \task_1 = \task_2\\
		iv \models \tasks &\quad\iff\quad \forall i \in \inputref. iv(i) \ne \bot \Leftrightarrow \exists \sref \in \tasks. \sref = i.
	\end{align*}
In this case, tasks are satisfied at time $t$ whenever their stream receives a value, i.e., is not $\bot$ at time $t$.
There exist no dependencies between tasks, and an input reflects a set of tasks if each input receives a new value iff it is contained in the set of tasks.
\end{example}
\begin{example}[Stream Sets]
Alternatively, consider a scheduler selecting groups of streams to be jointly evaluated.
We then set $\Tasks = \powerset{\Sref}$ and
\begin{align*}
\taskSat{\task}{\world}{t} &\quad\iff\quad \forall \sref \in \task. \world(\sref)(t) \ne \bot\\
\task_1 \preceq \task_2 &\quad\iff\quad \task_1 \subseteq \task_2\\
iv \models \tasks &\quad\iff\quad \forall i \in \inputref. iv(i) \ne \bot \Leftrightarrow \exists \task \in \tasks. i \in \task.\nonumber
\end{align*}
In this representation, tasks have dependencies through a subset relation, and an input reflects a set of tasks if at least one task updates each input.
\end{example}

\subsection{Dynamic Schedule Constraints}

Next, we introduce the concept of dynamic scheduling constraints, which specify how each task is expected to be evaluated:
\begin{definition}[Dynamic Schedule Constraint]
A \emph{dynamic schedule constraint} is a function that maps the current monitor state $(\World, \time)$ to a scheduling decision over tasks:
\begin{align*}
	\Schedule : (\World \times \time) \rightarrow \Tasks \rightarrow \schedulingDecision \hspace{2cm} \schedulingDecision &= \set{\scheduleYes, \scheduleMaybe, \scheduleNo}
\end{align*}
For each task, the constraint assignes whether the task must ($\scheduleYes$), may ($\scheduleMaybe$), or must not ($\scheduleNo$) be evaluated at that time.
\end{definition}

The semantics of stream-based languages ensure that all stream values are computed in accordance with their defining stream equations.
We extend this semantics to account for scheduling constraints, requiring the evaluation model to also satisfy the tasks according to a constraint.
In addition, we introduce a \emph{bandwidth bound} $B$, which encodes the bandwidth limitations between sensors and the monitor.
Formally, given a specification $\varphi$, a schedule constraint $\psi \in S$, and a bandwidth constraint $B$, the \emph{scheduled semantics} is defined as:
\begin{align*}
	\semantics{(\varphi, \psi, B)} = \big\{\world \in \World \;\mid\; \world \in \semantics{\varphi} \land \world \in \semantics{\psi} \land \forall t \in \time. B(\world, t)\big\}.
\end{align*}
Intuitively, an evaluation model is valid if it:
\begin{enumerate*}
	\item correctly computes all stream values according to $\varphi$,
	\item adheres to the scheduling decisions made by $\psi$, and
	\item respects the bandwidth constraint $B$ at all times.
\end{enumerate*}

For an evaluation model to satisfy a constraint $\psi\in S$, the satisfaction or non-satisfaction of tasks must always align with $\psi$:
\[
	\semantics{\psi} = \left\{\world \in \World \;\mid\; \forall t \in \time. \forall \task \in \Tasks. \begin{cases}
		\taskSat{\task}{\world}{t + 1} &\text{if } \psi(\world, t)(\task) = \scheduleYes\\
		\top &\text{if } \psi(\world, t)(\task) = \scheduleMaybe\\
		\taskUnSat{\task}{\world}{t + 1} &\text{if } \psi(\world, t)(\task) = \scheduleNo
	\end{cases}\right\}.
\]
If the schedule constraints assign $\scheduleYes$ to $\task$ at time $t$, the evaluation model must satisfy that task at time $t+1$.
If the decision is $\scheduleNo$, it must not satisfy that task, while for $\scheduleMaybe$, both outcomes are permitted.

Last, we define the bandwidth constraints imposed by the communication between sensors and the monitor.
These constraints are formalized as a predicate $B: (\World \times \time) \rightarrow \BB$, which determines whether the inputs received by the monitor at time $t$ conform to the bandwidth limitations.
In this paper, we consider a simple constraint model $\simpleBound{b}$ that limits the number of input streams that can receive a value at the same time to a fixed threshold $b$:
\[
	\simpleBound{b}(\world, t) = |\setCond{i \in \inputref}{\world(i)(t) \ne \bot}| \le b.
\]
Other constraints could account for the varying bit widths of individual input types or enforce protocols that require specific combinations of inputs.

\subsection{Static Scheduling Constraints}
\label{sec:static_schedule}

This section introduces three \emph{static schedule constraints} -- deadlines, priorities, and their combination -- and presents a translation deriving their dynamic counterpart.
Static schedule constraints contain \emph{conditions} $\condition$, which can be evaluated to a boolean under a given evaluation state $\evalTo{\world}{t}: \condition \rightarrow \BB$, to constrain a task differently at runtime.
In the following paragraphs, we describe these static schedules individually.

\subsubsection{Deadline}
The first static schedule constraints describe deadlines.
The constraints assign upper bounds to tasks, which indicate that a task should not be evaluated later than its deadline.
It is defined as
\[
	\IntervalSchedule : \Tasks \rightarrow \powerset{\condition \times \RR}
\]
and assigns each task to a set of pairs, each consisting of conditions and a corresponding deadline.
Given such a static schedule constraint $\staticSchedule \in \IntervalSchedule$, we derive a dynamic schedule $\psi \in \Schedule$ as follows:
\begin{align*}
	\dynamicSchedule(\world, t)(\task) = \begin{cases}
		\scheduleYes & \text{if } \exists (c, dl) \in \staticSchedule(\task) \land \exists t' < t. c \evalTo{\world}{t'} \top\\
		& \phantom{\text{if }} \land \forall t'' \in (t', t]. \taskUnSat{\task}{\world}{t''}\\
		& \phantom{\text{if }} \land \world(t+2) > \world(t') + dl\\
		\scheduleMaybe & \text{otherwise}
	\end{cases}
\end{align*}
Intuitively, assuming tasks represent individual streams, a stream must be evaluated at time $t+1$ (i.e., $\dynamicSchedule(\world, t)(\task) = \scheduleYes$) if there exists a condition in the static schedule that was satisfied at an earlier time $t'$ and no new value has been produced for that stream since.
Then, $t+1$ is the last chance to produce a value for that stream before violating its deadline.
Otherwise, the scheduling decision defaults to $\scheduleMaybe$, allowing the stream's evaluation to be postponed.


\subsubsection{Priority}

We may want to fully utilize the available bandwidth without manually assigning explicit deadlines. 
To support this use case, we introduce the \emph{priority}-based static schedule constraint.
In this case, each task is assigned a priority depending on conditions, dynamically determining its importance:
\[
	\PrioritySchedule : \Tasks \rightarrow \powerset{\condition \times \NN}.
\]
At each time step, the scheduler must select the streams with the highest priority. 
Given a static schedule constraint $\staticSchedule \in \PrioritySchedule$, we determine the current priority of a task with $\lastPrio_\staticSchedule$:
\begin{align*}
	\lastPrio_\staticSchedule &: \Tasks \times \World \times \time \rightarrow (\condition \times \NN)_\bot\\
	\lastPrio_\staticSchedule(\task, \world, t) &= \begin{dcases}
		\argmax_{(c,p) \in S} (\max \setCond{t'}{t' \le t \land c \evalTo{\world}{t'} \top}) & \text{if $(c,p)$ exists}\\
		\bot & \text{otherwise}
	\end{dcases}
\end{align*}
$\lastPrio_\staticSchedule$ returns the most recent priority assignment of a task (if any) whose condition evaluates to true in $\world$ at some time point $t' \le t$.
Using this function, we translate a static schedule $\staticSchedule \in \PrioritySchedule$ into a dynamic schedule $\dynamicSchedule \in \Schedule$ with
\begin{align*}
	\dynamicSchedule(\world, t)(\task) = \begin{cases}
	\scheduleYes & \text{if } (c_1, p_1) = \lastPrio_{\staticSchedule}(\task, \world, t)\\
	& \phantom{if } \land \exists \task'. (c_2, p_2) = \lastPrio_{\staticSchedule}(\task', \world, t)\\
	& \phantom{if } \land \taskSat{\task'}{\world}{t+1} \land p_1 > p_2\\
	& \phantom{if } \land \neg\exists \task'' \succeq \task. \taskSat{\task''}{\world}{t+1}\\
	\scheduleMaybe & \text{otherwise}
	\end{cases}
\end{align*}
With this definition, a task must be selected if another, lower-priority task is selected for evaluation at the next step.
This restriction, however, does not hold if the task is part of a larger, high-priority task, indicated by the dependency relation.


\subsubsection{Deadline and Priority}

The priority schedule allows specifying the relative importance of tasks without explicitly reasoning about individual deadlines.
However, this can lead to starvation: a lower-priority task may never be evaluated if higher-priority tasks continuously occupy all available bandwidth.
To address this issue, we propose a combined schedule that merges priorities with deadlines.
By default, the schedule behaves like the priority schedule.
However, each $\task$ is assigned a deadline $dl_{\task}$, which defines the maximum duration it may remain unevaluated before it is considered \emph{overdue}:
\begin{align*}
	&\overdue: \Tasks \times \World \times \time \rightarrow \BB\\
	&\overdue(\task, \world, t) = \exists \task' \subseteq \task.\\
	&\hspace{2.8cm}\world(t) - \world(\max\setCond{t' \in \time}{t' < t \land \taskSat{\task'}{\world}{t'}}) > dl_{\task}
\end{align*}
An overdue stream should be evaluated, regardless of the assigned priority.
We translate a static schedule constraint $\staticSchedule \in \PrioritySchedule$ into a dynamic schedule constraint $\dynamicSchedule \in \Schedule$ with:
\begin{align*}
	\dynamicSchedule(\world, t)(\task) = \begin{cases}
	\scheduleYes & \text{if } \exists \task' . \overdue(\task, \world, t+1) \land \neg \overdue(\task', \world, t+1)\\
	& \phantom{\text{if }} \land \taskSat{\task'}{\world}{t+1}\\
	\scheduleYes & \text{if } (\_, p_1) = \lastPrio_s(\task, \world, t)\\
	& \phantom{if } \land \exists \task'. (\_, p_2) = \lastPrio_s(\task', \world, t)\\
	& \phantom{if } \land \taskSat{\task'}{\world}{t+1} \land p_1 > p_2 \land \neg\overdue(\task', \world, t+1)\\
	& \phantom{if } \land \neg\exists \task'' \succeq \task. \taskSat{\task''}{\world}{t+1}\\
	\scheduleMaybe & \text{otherwise}
	\end{cases}
\end{align*}
This definition assigns overdue tasks an even higher priority, but in general, it follows the previous definition.

\subsection{Valid Scheduler}

A scheduler is a program that, given schedule and bandwidth constraints, decides at each time point which set of tasks to evaluate.
If all selections respect the constraints, the scheduler is considered valid:
\begin{definition}[Valid Scheduler]\label{def:valid_scheduler}
	Given a static schedule constraint $\staticSchedule$ over a set of $\Tasks$, a specification $\varphi$ and a bandwidth bound $B$, a scheduler $\SA_{\staticSchedule,B} : \World \times \time \rightarrow \powerset{\Tasks}$ that decides at each timepoint which streams are evaluated at the next time step, is \emph{valid} if
	\begin{enumerate}
		\item
		$\begin{aligned}[t]
			&\forall \world \in \semantics{(\varphi, \staticSchedule, B)}. \forall t \in \time. \forall \tasks \subseteq \Tasks. \forall i \in \inputValues.\\
			&\quad \prefix_{\SA_{\staticSchedule,B}}(\world, t) \land \SA(\world, t) = \tasks \land i \models \tasks\\
			&\quad\rightarrow \exists \world' \in \semantics{(\varphi, \staticSchedule, B)}. \validTasks(\world', t+1, \tasks) \land \world'[..t] = \world[..t] \land \world'[t+1] = i.
		\end{aligned}$
		\item 
		$\begin{aligned}[t]
			&\forall \world \in \semantics{(\varphi, \staticSchedule, B)}. \forall t \in \time. \exists t' > t. |\SA_{\psi,B}(\world, t')| \ge 1
		\end{aligned}$
	\end{enumerate}
	with
	\begin{align*}
        \validTasks &: \World \times \time \times \powerset{\Tasks} \rightarrow \BB\\
		\validTasks(\world, t, \tasks) &= \forall \task \in tasks. \taskSat{\task}{\world}{t} \land \forall \task \notin \tasks . \taskUnSat{\task}{\world}{t}\\
		&\quad\; \land \forall \task_1,\task_2 \in \Tasks. \task_1 < \task_2 \land \task_1 \in \tasks \rightarrow \task_2 \in \tasks\\
        \prefix_{\SA_{\staticSchedule,B}} &: \World \times \time \rightarrow \BB\\
		\prefix_{\SA_{\staticSchedule,B}}(\world, t) &= \forall t' < t. \validTasks(\world, t'+1, \SA_{\staticSchedule,B}(\world, t'))
	\end{align*}
\end{definition}

A scheduler is considered valid if it satisfies two properties:
\begin{enumerate*}
    \item The scheduler must never get stuck.
    That is, for every point in time and any possible choice of input values consistent with the selected tasks, the evaluation model must be able to continue as a correctly scheduled evaluation model.
    \item The scheduler may not indefinitely select empty tasks.
\end{enumerate*}
These conditions ensure that the scheduler defines a sound evaluation strategy.

\section{Active Scheduling in RTLola}
\label{sec:rtlola_scheduling}
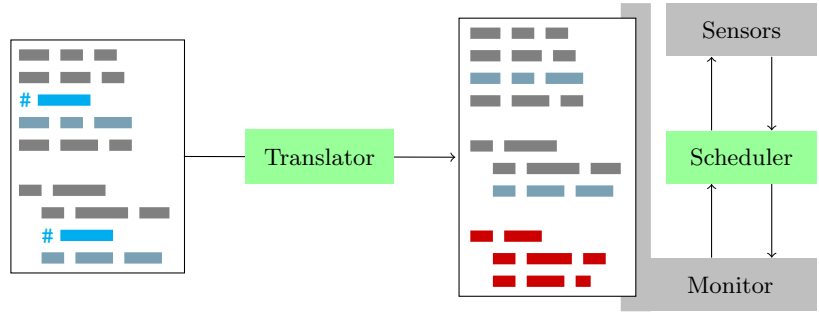
\begin{figure}[t]
\hspace*{5mm}\scalebox{1.0}{
\begin{tikzpicture}[codeline/.style={line width=1.5mm,gray},annotation/.style={codeline,cyan},annotated/.style={codeline,cyan!30!gray},additional/.style={codeline,red!80!black}]
	\def\sep{-0.3}
	\def\restlen{2cm}
	\begin{scope}[local bounding box=in_spec]
		\draw[codeline,dash pattern=on 4mm off 1.5mm on 3mm off 1.5mm on 3mm off 10cm] (0,0) -- (\restlen,0);
		\draw[codeline,dash pattern=on 4mm off 1.5mm on 4mm off 1.5mm on 3mm off 10cm] (0,1*\sep) -- (\restlen,1*\sep);
		\draw[annotation,dash pattern=on 7mm off 10cm,xshift=2.5mm] (0,2*\sep) node[left=-1mm] {\lstinline|#|} -- (\restlen,2*\sep);
		\draw[annotated,dash pattern=on 4mm off 1.5mm on 3mm off 1.5mm on 5mm off 10cm] (0,3*\sep) -- (\restlen,3*\sep);
		\draw[codeline,dash pattern=on 4mm off 1.5mm on 5mm off 1.5mm on 3mm off 10cm] (0,4*\sep) -- (\restlen,4*\sep);
		\draw[codeline,dash pattern=on 3mm off 1.5mm on 7mm off 10cm] (0,6*\sep) -- (\restlen,6*\sep);
		\draw[codeline,dash pattern=on 3mm off 1.5mm on 7mm off 1.5mm on 4mm off 10cm,xshift=3mm] (0,7*\sep) -- (\restlen,7*\sep);
		\draw[annotation,dash pattern=on 7mm off 10cm,xshift=5.5mm] (0,8*\sep) node[left=-1mm] {\lstinline|#|} -- (\restlen,8*\sep);
		\draw[annotated,dash pattern=on 3mm off 1.5mm on 5mm off 1.5mm on 5mm off 10cm,xshift=3mm] (0,9*\sep) -- (\restlen,9*\sep);
	\end{scope}

	\draw ([yshift=2mm]in_spec.north west) rectangle ([yshift=-2mm]in_spec.south west -| 22mm,);

	\begin{scope}[xshift=6cm,yshift=2.9mm,local bounding box=out_spec]
		\draw[codeline,dash pattern=on 4mm off 1.5mm on 3mm off 1.5mm on 3mm off 10cm] (0,0) -- (\restlen,0);
		\draw[codeline,dash pattern=on 4mm off 1.5mm on 4mm off 1.5mm on 3mm off 10cm] (0,1*\sep) -- (\restlen,1*\sep);
		\draw[annotated,dash pattern=on 4mm off 1.5mm on 3mm off 1.5mm on 5mm off 10cm] (0,2*\sep) -- (\restlen,2*\sep);
		\draw[codeline,dash pattern=on 4mm off 1.5mm on 5mm off 1.5mm on 3mm off 10cm] (0,3*\sep) -- (\restlen,3*\sep);
		\draw[codeline,dash pattern=on 3mm off 1.5mm on 7mm off 10cm] (0,5*\sep) -- (\restlen,5*\sep);
		\draw[codeline,dash pattern=on 3mm off 1.5mm on 7mm off 1.5mm on 4mm off 10cm,xshift=3mm] (0,6*\sep) -- (\restlen,6*\sep);
		\draw[annotated,dash pattern=on 3mm off 1.5mm on 5mm off 1.5mm on 5mm off 10cm,xshift=3mm] (0,7*\sep) -- (\restlen,7*\sep);
		\draw[additional,dash pattern=on 3mm off 1.5mm on 5mm off 10cm] (0,9*\sep) -- (\restlen,9*\sep);
		\draw[additional,dash pattern=on 3mm off 1.5mm on 6mm off 1.5mm on 3mm off 10cm,xshift=3mm] (0,10*\sep) -- (\restlen,10*\sep);
		\draw[additional,dash pattern=on 3mm off 1.5mm on 5mm off 1.5mm on 2mm off 10cm,xshift=3mm] (0,11*\sep) -- (\restlen,11*\sep);
	\end{scope}

	\draw ([yshift=2mm,xshift=-1.5mm]out_spec.north west) coordinate (top_left) rectangle ([yshift=-2mm]out_spec.south west -| 82mm,) coordinate (bottom_right);

	\draw[->] ([xshift=2.3cm]in_spec.west) -- node[fill=green!40!white,inner sep=2.5mm] {Translator} ([xshift=-2mm]out_spec.west);

	\begin{scope}[on background layer]
	\fill[lightgray] (bottom_right) -| ($(bottom_right) - (2mm,2mm)$) -- ++(4mm,0) coordinate (gray_bottom_right) -- ([xshift=2mm,yshift=2mm]top_left -| bottom_right) coordinate(gray_top_right) -| ([xshift=-2mm]top_left -| bottom_right) -- (top_left -| bottom_right) -- cycle;
	\end{scope}

	\begin{scope}[transform canvas={xshift=-1mm}]
	\node[anchor=south west,fill=lightgray,minimum width=2.3cm,minimum height=7mm] (monitor) at (gray_bottom_right) {Monitor};
	\node[anchor=north west,fill=lightgray,minimum width=2cm,minimum height=7mm] (sensors) at ([xshift=3mm]gray_top_right) {Sensors};
	\node[fill=green!40!white,minimum width=2cm,minimum height=7mm] (scheduler) at ($(monitor -| sensors)!0.5!(sensors)$) {Scheduler};
	\draw[->,transform canvas={xshift=4mm}] (scheduler) -- (scheduler |- monitor.north);
	\draw[->,transform canvas={xshift=-4mm}] (scheduler |- monitor.north) -- (scheduler);
	\draw[->,transform canvas={xshift=4mm}] (scheduler |- sensors.south) -- (scheduler);
	\draw[->,transform canvas={xshift=-4mm}] (scheduler) -- (scheduler |- sensors.south);
	\end{scope}
\end{tikzpicture}
}
\caption{
	Overview of the scheduling process.
}
\label{fig:impl_overview}
\end{figure}

This section demonstrates how our scheduling approach is integrated with the stream-based monitoring language \rtlola.
The overall architecture is illustrated in \Cref{fig:impl_overview}.
First, we allow users to augment \rtlola specifications with scheduling annotations, as shown on the left side of the figure.
These annotations assign deadlines or priorities to individual streams or clauses, and are detailed in \Cref{sec:annotations}.
A translator then processes the annotated specification and produces a transformed \rtlola specification.
During this step, the translator interprets the scheduling annotations and adds helper streams that encode the scheduling constraints.
We describe the translation process in \Cref{sec:translation} in more detail.
The resulting specification is compatible with existing \rtlola implementations, while a separate scheduling component interfaces between the backend and the sensors to issue sensor queries at runtime.
We further explain this scheduler interface in \Cref{sec:scheduler}.

\subsection{Annotations}
\label{sec:annotations}
\noindent For the configuration of the scheduler, we embed the scheduling-related information as annotations directly within the specification.
These annotations attach constraints to streams -- either directly on input streams or on the \lstinline!eval!-clause of output streams.
Each annotation defines either a \emph{priority} or a \emph{deadline} -- following the idea from \Cref{sec:static_schedule}.
We represent the annotations as a mapping $a \in \Sref \rightarrow \powerset{\condition \times \VV}$.
The conditions are a combination of the pacing of the stream and an expression derived from the \lstinline!when!-conditions of the annotated eval clause.
However, since \lstinline!when!-conditions are evaluated top-to-bottom, the effective condition for each clause includes not only its own \lstinline!when!-condition but also the negation of all preceding ones.
This ensures that conditions are mutually exclusive, reflecting the semantics of sequential \lstinline!when! evaluation.
Depending on the kind of static schedule used for scheduling, the value $\VV$ is either a deadline $\in \RR$ or a priority $\in \NN$.
For deadline-priority scheduling, annotations consist of priorities, while deadlines of individual tasks are configured globally.

\begin{example}
The specification in \Cref{fig:example_annotations} monitors two conditions:
\begin{enumerate*}
	\item whether the current latitude and longitude violate a geofence, and
	\item whether the current altitude exceeds an upper bound.
\end{enumerate*}
The annotations result in the following representation:
{\footnotesize
\begin{alignat*}{2}
	s(\text{\lstinline!bound_violation!}) = &\; \{&&(\text{\lstinline!|@lat&&lon|!}, \text{\lstinline!distance_to_bound < 12.0!}), 10),\\
	& &&(\text{\lstinline!|@lat&&lon|!}, \text{\lstinline!distance_to_bound >= 12.0!}), 1)\},\\
	s(\text{\lstinline!altitude_violation!}) = &\; \{&&(\text{\lstinline!|@alt|!}, \text{\lstinline!true!}), 5)\}
\end{alignat*}}
The specification assigns the output \lstinline!bound_violation! a high priority, numerically represented as 10, if the distance to the bound is smaller than 12.
If it is further away from the bound, the bound check is assigned a low priority, a 1.
The altitude check is assigned a constant medium priority, represented as a 5.
As a result, when the system is far away from a geofence boundary, the altitude has a higher priority and is therefore checked more frequently.
Conversely, as the system approaches a potential geofence violation, geofence monitoring is prioritized.
\end{example}

\begin{figure}[t]
\begin{lstlisting}
input lat, lon, alt : Float64
output distance_to_bound
    eval |@lat&&lon|
          with min(lat - 3, UPPER_LAT - 8, lon - 5, UPPER_LON - 10)
output bound_violation
	#[priority="high"]
	eval |@lat&&lon| when distance_to_bound < 12.0
          with distance_to_bound < 0.0
	#[priority="low"]
	eval |@lat&&lon| when distance_to_bound >= 12.0
          with false
output altitude_violation
	#[priority="medium"]
	eval |@alt| with altitude > 50.0
\end{lstlisting}

\caption{Example of scheduling annotations in \rtlola.}
\label{fig:example_annotations}
\end{figure}

\subsection{Translation}
\label{sec:translation}

A translator converts the annotated specification into a standard \rtlola specification that any existing implementation can process.
This translation aims to ensure that the resulting monitor satisfies the specified scheduling constraints by guiding the scheduling component to supply the necessary inputs at appropriate times.
For this, it adds additional output streams to the resulting specification, which are in turn read by the scheduling component.

\setlength\intextsep{0pt}
\begin{wrapfigure}[10]{r}{0.5\linewidth}
\begin{lstlisting}
	#[priority="high"]
	input a : UInt64
	#[priority="medium"]
	input b : UInt64
	output c
		#[priority="low"]
		eval |@a| when true with a + 1
\end{lstlisting}
\caption{Example specification.}
\label{fig:problem_spec_prio}
\end{wrapfigure}
At first glance, one expects the scheduling annotations to correspond directly to the static schedule.
However, setting $\Tasks = \Sref$ leads to an issue, as illustrated in \Cref{fig:problem_spec_prio}.
Assume the scheduler is restricted to providing a new value to only one input stream at a time.
In this scenario, there is no valid schedule that updates any inputs.
If stream $a$ receives a new value due to its high priority, stream $c$ is also evaluated, triggered by its pacing.
Yet, a stream with higher priority, namely $b$, is not evaluated, contradicting the scheduling semantics.

The root of the issue is, that while output streams can be annotated with scheduling constraints, in practice, the evaluation of output streams is controlled by the underlying monitor and determined by its pacing type -- the scheduler can only decide when to emit new input values to the monitor -- and not prevent the evaluation of outputs.
Consequently, the scheduler must supply inputs at a frequency that respects the scheduling annotations of all outputs streams, by satisfying their pacing at the right times.
To support this idea, we restrict tasks to sets of input streams, i.e., $\Tasks = \powerset{\inputref}$, and propagate all annotations from output streams to the corresponding sets of inputs necessary for their evaluation.
\begin{definition}[\rtlola Tasks]
	Given an \rtlola specification $\varphi$ containing input streams $\inputref$, we define the task set $\Tasks = \powerset{\inputref}$ with
	\begin{align*}
		\taskSat{\task}{\world}{t} \quad&\iff\quad \forall i \in \task. \world(i)(t) \ne \bot\\
        \task_1 \preceq \task_2 \quad&\iff\quad \task_1 \subseteq \task_2\\
        i \models \tasks \quad&\iff\quad \forall s \in \inputref . i(s) \ne \bot \Leftrightarrow \set{i} \in \tasks.
	\end{align*}
\end{definition}

Given annotations $a : \Sref \rightarrow \powerset{\condition \times \VV}$, we construct the static schedule $\staticSchedule_a$ as the following, where $\mathit{pac}(s)$ returns the pacing of stream $s \in \Sref$:
\begin{align*}
	\staticSchedule_a &: \powerset{\inputref} \rightarrow \powerset{\condition \times \VV}\\
	\staticSchedule_a(\task) &= \big\{(c, v)\;\mid\;v = \max\big\{v'\;\mid\;\exists s \in \Sref. \mathit{pac}(s) \subseteq \task \land (c', v') \in a(s) \land c' \Rightarrow c\big\}\big\}
\end{align*}
In other words, for each task -- i.e., set of input streams -- the schedule includes all constraint-condition pairs $(c,v)$ such that $v$ is the maximal value out of all annotations of streams whose pacing is implied by $\task$, and whose conditions imply $c$.
The maximal value is chosen according to the domain $\VV$: it represents the most restrictive constraint, e.g., the highest priority, or the shortest deadline.

Through pacing types in \rtlola, the task dependency relation is not sufficient to determine if a task set is valid.
Consider the following example where we have two input streams, $i$ and $i'$, and a task set containing the tasks $i$ and $i'$.
When both input streams are updated, the pacing consisting of their combination also activates because of the \rtlola semantics.
Therefore, we need to ensure that the task containing the combination is also part of the task set.
We therefore have to strengthen the notion of valid task sets introduced in \Cref{def:valid_scheduler} with the additional constraint
\[
    \forall \task_1,\task_2 \in \Tasks. \task_1,\task_2 \in \tasks \rightarrow \task_1 \cup \task_2 \in \tasks.
\]

To represent this static schedule as additional \rtlola streams, we construct a single stream for each task, i.e., each pacing type.
We utilize different clauses guarded by the corresponding condition to dynamically assign the schedule value of a task based on the current state.
The clauses are ordered by their schedule value in ascending order, so that more restrictive schedules are considered first.
Further, for each task, separate output streams are added, which note the time of the task's last evaluation, and if needed, additional streams to indicate if a task is overdue.
For the specification in \Cref{fig:example_annotations}, the translation process would add the following output streams to the specification:
\begin{lstlisting}
	output schedule_lat_lon
		eval |@lat&&lon| when distance_to_bound < 10.0 with 10
		eval |@lat&&lon| when distance_to_bound >= 10.0 with 1
    output last_lat_lon eval |@lat&&lon| with now
	output schedule_alt eval |@alt| with 5
    output last_alt eval |@alt| with now
\end{lstlisting}

\subsection{Scheduler}
\label{sec:scheduler}

While all previous steps occur before the monitor is executed, the scheduler is the runtime component responsible for querying new inputs and passing them to the underlying monitor.
In each cycle, the scheduler queries sensors and forwards the obtained values to the monitor, triggering a new evaluation step.
As usual, the monitor computes new output stream values, including the schedule streams that guide the scheduling decisions.
The scheduler uses these newly computed values to decide which inputs to query in the next cycle.

The strategy the scheduler uses to populate the available bandwidth depends on the type of scheduling constraints.
For all strategies, the bandwidth is specified as the number of input values per event $b$ and the number of events per second $f_e$.
At runtime, the scheduler emits events to the monitor at the specified frequency and populates each event with the predefined number of inputs according to the scheduling strategy.

For deadline scheduling, the scheduler $DS_{\staticSchedule,B}$ employs an earliest-deadline-first strategy~\cite{horn1974some}.
In this mode, each task is associated with a deadline representing the maximum allowable timestamp of the next update.
The scheduler tracks the current time and selects the task with the most urgent deadlines first until the bandwidth is fully utilized.
In contrast, for priority scheduling, the scheduler $PS_{\staticSchedule,B}$ fills the event with tasks, selected in decreasing order of priority.
It is the same for the deadline-priority scheduler $DPS_{\staticSchedule,B}$, only that overdue tasks are assigned to a new, highest priority.
If multiple streams share the same priority level, the scheduler resolves this by choosing the stream that was updated the longest time ago.
\ifthenelse{\boolean{fullversion}}{
The formal definition of these schedulers can be found in~\Cref{app:schedulers}.
}{
The formal definition of these schedulers can be found in the full version.
}
With some restrictions, these schedulers satisfy the validity conditions in \Cref{def:valid_scheduler}:
\begin{theorem}[Valid \rtlola Schedulers]
\label{theorem:valid-scheduler}
    Given a well-defined specification $\varphi$, an annotation $a \in \Sref \rightarrow \powerset{\condition \times \VV}$, and a bound $B = \simpleBound{b}$.
    \begin{itemize}
        \item The schedulers $PS_{S_a,B}$ and $DPS_{S_a, B}$ are valid if $b \ge \max_{\task \in \Tasks} |\task|$,
        \item The scheduler $DS_{\staticSchedule_a,B}$ is valid if $\forall \task \in \Tasks. \forall (c,dl) \in \staticSchedule(\task). dl > n$ and $b \ge \max_{\task \in \Tasks} |\task|$, with:
        \begin{align*}
		n &= \min \setCond{n'}{\forall \tasks \in \mathit{Permutations}_{|\Tasks|}(\Tasks).splits_B(\tasks, \varepsilon) = n'}
        \end{align*}
        \begin{align*}
        \mathit{Permutations}_{|\Tasks|}(\Tasks) &= \setCond{\task_1\task_2\ldots \in \Tasks^{|\Tasks|}}{\forall i,j. i = j \lor \task_i \ne \task_j}\\
        splits_B(\varepsilon, cur) &= 1\\
		splits_B(\task_1 \circ \tasks, cur) &= \begin{cases}
			1 + splits_B(\task_1 \circ \tasks, \varepsilon) & \text{if } |cur \cup \task_1| > b\\
			splits_B(\tasks, cur \cup \task_1) & \text{otherwise}
		\end{cases}
	\end{align*}
    \end{itemize}
\end{theorem}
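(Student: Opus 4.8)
The plan is to verify the two validity conditions of \Cref{def:valid_scheduler} separately for each of the three schedulers, reusing as much of the argument as possible across the cases. Condition~2 (the scheduler does not forever select empty task sets) will be the easy part: since $b \ge \max_{\task \in \Tasks}|\task|$, every single task fits into one event, and since $\varphi$ is well-defined there is at least one input stream whose pacing must eventually fire; all three strategies -- EDF, priority, priority-with-deadlines -- will always place at least one nonempty task whenever there is a pending deadline, a positive priority, or an overdue task, so I would argue that after finitely many steps a nonempty selection is forced. For the priority and deadline-priority schedulers this is immediate because the static schedule $\staticSchedule_a$ assigns every relevant pacing-task a condition that is eventually satisfiable (otherwise the corresponding output stream is never evaluated, contradicting well-definedness); for the pure deadline scheduler I would use the finiteness of $n$ together with the hypothesis $dl > n$ to show a deadline is always eventually active.

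For Condition~1 the key object is $\validTasks$, strengthened (as the text requires) with the closure property $\task_1,\task_2 \in \tasks \Rightarrow \task_1 \cup \task_2 \in \tasks$. I would fix $\world \in \semantics{(\varphi,\staticSchedule_a,B)}$, a time $t$ with $\prefix_{\SA}(\world,t)$, the selected set $\tasks = \SA(\world,t)$, and an input $i \models \tasks$, and then exhibit the continuation $\world'$ by: taking $\world'[..t] = \world[..t]$, setting $\world'(t+1)$ to have exactly the input values dictated by $i$ (so $\world'(j)(t+1) \ne \bot$ iff $\set{j} \in \tasks$), and extending the future arbitrarily inside $\semantics{(\varphi,\staticSchedule_a,B)}$ using well-definedness of $\varphi$ (every input trace has a unique evaluation model, so the tail is determined once we fix all future inputs, e.g. all $\bot$). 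The three things to check on this $\world'$ are: (a) $\validTasks(\world',t+1,\tasks)$ holds -- here I use the \rtlola task semantics $\taskSat{\task}{\world'}{t+1} \iff \forall j\in\task.\,\world'(j)(t+1)\ne\bot$, which matches exactly the pacing-activation semantics of \rtlola, so a pacing-task fires iff all its inputs were delivered, and the union-closure and downward-closure conditions follow because \rtlola's combined pacings activate precisely on the union of delivered inputs; (b) $\world' \in \semantics{\psi}$ where $\psi$ is the dynamic schedule derived from $\staticSchedule_a$ -- this is where the per-scheduler argument lives; (c) $B(\world',t+1)$ holds, i.e. $|\setCond{j}{\world'(j)(t+1)\ne\bot}| \le b$, which is immediate from $b \ge \max_\task|\task|$ once we know $\SA$ only ever selects a set of tasks whose union has size $\le b$ -- true for priority/EDF by construction, and true for the deadline scheduler exactly when no more than $b$ input slots are demanded, which is the role of the $splits_B$/$n$ bound.

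The heart of the proof, and the part I expect to be the main obstacle, is step (b): showing that the concrete greedy strategy of each scheduler never contradicts the dynamic constraint $\psi$. For the priority scheduler I would argue by contradiction: if $\psi(\world',t)(\task) = \scheduleYes$ but $\task \not\subseteq \SA(\world',t)$, then by the definition of $\psi$ there is a strictly lower-priority task $\task'$ that \emph{is} selected and satisfied at $t+1$ while no super-task of $\task$ is selected -- but then the priority-ordered greedy fill would have inserted $\task$ before $\task'$ (there is room, since $\task'$ fit and $b \ge |\task|$), contradiction; the delicate point is handling ties and the dependency/union-closure interaction, i.e. making sure that "selecting $\task$" is always feasible alongside whatever higher-priority tasks were already chosen, which is where $b \ge \max_\task|\task|$ is used in its full strength together with the fact that $\staticSchedule_a$ takes the \emph{maximum} annotation value over all streams whose pacing is subsumed. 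For the deadline scheduler the analogous obstacle is showing EDF never lets a task pass its deadline: here I would need the combinatorial lemma that, over any window, the number of distinct pacing-tasks that can come due is bounded, so that $n$ (the worst-case number of events needed to service every task once, computed by $splits_B$ over all permutations) is finite, and then a standard EDF exchange argument shows that if every deadline exceeds $n$, no deadline is ever missed; wiring $splits_B$'s recursion (which models packing tasks greedily into events of capacity $b$ and counting event-boundaries) to the actual run of the scheduler is the fiddliest bookkeeping. For the deadline-priority scheduler I would combine the two arguments, noting that overdue tasks are lifted to a fresh top priority and otherwise the priority argument applies verbatim, so no genuinely new idea is needed beyond checking that the overdue-promotion in $\psi$ and in the scheduler $DPS$ agree.
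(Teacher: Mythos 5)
Your plan follows essentially the same route as the paper's proof: both validity conditions of \Cref{def:valid_scheduler} are checked per scheduler; the witness $\world'$ is obtained by gluing the prefix, the chosen input, and the continuation guaranteed by well-definedness of $\varphi$; the bandwidth bound and $\validTasks$ follow from the greedy fill together with $b \ge \max_{\task \in \Tasks}|\task|$; the priority case is settled exactly as in the paper by noting that the $\max$-construction of $\staticSchedule_a$ gives a union task at least the priority of its parts, so dependency-triggered tasks never contradict the dynamic constraint; and the deadline-priority case is the priority case with overdue tasks promoted to a fresh top priority. The one place you remain schematic is the heart of the deadline case: the paper discharges it with a concrete invariant proved by induction on time (\Cref{lemma:1}) -- after any EDF-consistent prefix, splitting the $\le_{DL}$-sorted task list into capacity-$b$ events $e_0 e_1 \ldots$ guarantees every task in $e_j$ has relative deadline at least $j$, so with $dl > n$ any task forced to $\scheduleYes$ lies in $e_0$ and is served at $t+1$ -- whereas you invoke a ``standard EDF exchange argument''; that route can also work, but you would still have to tie the $splits_B$ recursion to an explicit feasible schedule witnessing that all deadlines are meetable, which is exactly the bookkeeping the paper's invariant performs. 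Also, condition~2 needs no appeal to well-definedness or eventual satisfiability of conditions: the schedulers sort the full task set and greedily fill each event, so with $b \ge \max_{\task \in \Tasks}|\task|$ the selected set is nonempty at every step by construction, which is all the paper uses.
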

Generally, the bandwidth bound $b$ needs to be large enough to accommodate each task individually.
For the priority and deadline-priority schedulers, in each step, the tasks can be selected purely on the current priority assignments, and there are no constraints from previous evaluations.
The schedule selects the highest priority tasks.
Our construction can trigger tasks with a potentially lower priority; however, the dependency relation, together with the construction through the scheduling annotations, ensures that these priorities do not conflict with the semantics.
For the deadline scheduler, we need to account for constraints from previous cycles.
There, the bound ensures that previously imposed deadlines can't conflict with newly added ones.
\ifthenelse{\boolean{fullversion}}{
Detailed proof sketches for the validity of each scheduler can be found in~\Cref{app:schedulers}.
}{
Detailed proof sketches for the validity of each scheduler can be found in the full version.
}


\section{Implementation \& Evaluation}
\label{sec:evaluation}


We have implemented our approach from \Cref{sec:rtlola_scheduling} on top of the \rtlola monitoring framework~\cite{baumeister2024tutorial}.
For the evaluation, we use the AirSim simulator, a simulation environment for drones and cars developed by Microsoft.
AirSim exposes various sensors through an API, including GPS coordinates, altitude readings, camera images, and LIDAR data.

For our evaluation, we investigate the following setting with the deadline-priority constraints.
A drone is tasked to collect barometer data during flight for an experiment in a restricted airspace.
It relies on four sensors, which must share the limited available bandwidth of the drone: the GPS and altitude sensor for enforcing their respective bounds, and two barometers, whose data collection for the experiment is the primary target of the flight.
We use stream annotations as presented in \Cref{sec:annotations} to represent different priorities for the tasks in the specification.
The closer the drone comes to the geofence border, threatening a violation, the higher the assigned priority of the geofence task becomes.
Likewise, the closer the drone comes to an altitude violation, the higher the assigned priority.
The barometers, which serve the experiment, are assigned a constant medium priority.
As a result, they receive most of the bandwidth if there is no immediate danger of a boundary violation.
We employ overdue deadlines to ensure the geofence bounds are checked at least every 3 seconds to prevent violations from being missed because of higher priority tasks.
\ifthenelse{\boolean{fullversion}}{
The annotated specification can be found in \Cref{app:eval_spec}.
}{
The annotated specification can be found in the full version.
}


In our experiments, we compare the behavior of our scheduled monitoring approach against a set of baseline monitors that query all inputs at fixed frequencies.
Each baseline monitor queries all sensors at the same frequency, but we vary this frequency across baselines to explore the trade-off between bandwidth and responsiveness.
To ensure a fair comparison, all monitors observe the same drone flight in parallel.

The goal of the evaluation is to assess the effectiveness of actively scheduling the inputs in runtime monitoring under bandwidth constraints.
Specifically, we aim to answer the following questions:
\begin{enumerate*}
\item Can we reduce the overall bandwidth consumption while maintaining a comparable level of monitoring quality?
\item Can we detect specification violations earlier by allocating the bandwidth more intelligently -- focusing on data that is more likely to reveal a violation?
\end{enumerate*}

\begin{figure}[t]
	\begin{subfigure}[t]{0.498\linewidth}
	\includegraphics[width=\linewidth]{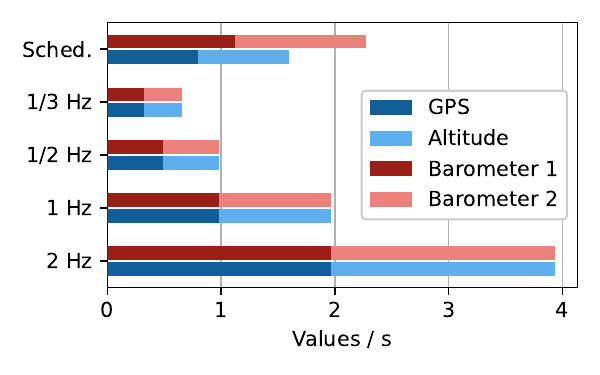}
	\caption{Bandwidth consumption of each sensor.}
	\label{fig:bandwidth_comparison}
	\end{subfigure}
	\hfill
	\begin{subfigure}[t]{0.498\linewidth}
	\includegraphics[width=\linewidth]{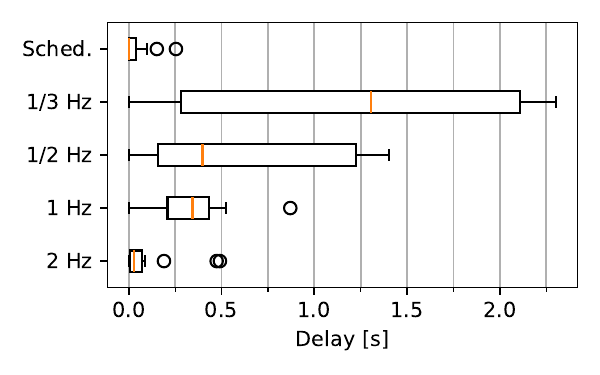}
	\caption{Delay until trigger is detected.}
	\end{subfigure}
	\caption{Comparison between the scheduled monitor and four fixed-frequency ones.}
	\label{fig:trigger_delay_comparison}
	\vspace*{-5mm}
\end{figure}

We evaluate the performance of our scheduling approach over 10 flights with the simulator by analyzing two metrics obtained from the evaluation.
\Cref{fig:bandwidth_comparison} shows the average number of values that occupy the bandwidth for each sensor.
For each monitor, we group the sensors into two categories, represented by the horizontal bars:
The blue bar captures the bandwidth used for geofence and altitude checks, while the red bar shows the bandwidth used for collecting barometer data for the experiment.
In the fixed-frequency monitors, bandwidth is distributed evenly across all sensors.
In contrast, the scheduled monitor dynamically adjusts the frequency of sensor queries based on the current state of the system, resulting in an uneven allocation of bandwidth.
As shown in the figure, the scheduled monitor allocates more bandwidth to the barometer sensors, aligning with the experiment's objective of maximizing barometer data collection.
Meanwhile, the sensors responsible for enforcing the geofence and altitude boundaries are queried less frequently.
In contrast, \Cref{fig:trigger_delay_comparison} presents a boxplot of the delay in detecting violations.
Delays are measured relative to the earliest point in time any monitor detects the same violation.

To answer the first question, the results show that our scheduled monitor detects violations as early as the fastest fixed-frequency monitor operating at 2~Hz.
Notably, it achieves this while consuming half the bandwidth, as evident from the comparison in the previous figure.

To answer the second question, we compare the scheduled monitor against a fixed-frequency monitor with equal bandwidth usage.
The scheduled monitor consumes four values per second -- equivalent to the bandwidth of a 1~Hz fixed-frequency monitor.
Despite this, the scheduled monitor detects violations significantly earlier than the 1~Hz monitor, because in critical situations, the scheduled monitor focuses on the more critical inputs.
This demonstrates that our approach makes more effective use of the available bandwidth through our scheduling approach.
\section{Conclusion}

We addressed the challenge of runtime monitoring in bandwidth-constrained environments by presenting an approach to dynamically allocating the available bandwidth to different sensors depending on the monitor's current state.
This approach enables the monitor to prioritize inputs and allocate bandwidth to data most likely to reveal specification violations.
To facilitate this, we define a formal semantics for scheduled monitors and introduce several static schedules.
Scheduling annotations, embedded directly within the specification, offer an intuitive and flexible mechanism to express constraints guiding the scheduler.
A scheduler interfacing between the monitor and the sensors is responsible for querying sensors at the appropriate times and passing the values to the underlying monitor.
We evaluated our approach using simulations in the AirSim drone simulator. 
The results demonstrate the effectiveness of scheduled monitoring in detecting violations early on, while consuming significantly less bandwidth compared to a monitor receiving all inputs at a fixed frequency.

For future work, we want to evaluate our approach in a real-world setting to assess the overhead introduced by querying sensors with the active role of the monitor.
Furthermore, many real-world settings involve sensors producing multiple values, leading to more complex task dependencies.
We want to investigate how our approach can be extended to handle such cases.


\subsubsection{Acknowledgments.} This work was partially supported by the German Research Foundation (DFG) as part of TRR 248 (No.~389792660) and by the European Research Council (ERC) Grant HYPER (No.~101055412).

\bibliographystyle{splncs04}
\bibliography{bibliography.bib}

\ifthenelse{\boolean{fullversion}}{
\clearpage
\appendix
\section{Semantics}

\begin{definition}[Evaluation Model]
\begin{align*}
	\stream &: \time \rightarrow \VV_\bot\\
	\streammap &: \Sref \rightarrow \stream\\
	\timemap &: \time \rightarrow \RR\\
	\World &: \streammap \times \timemap
\end{align*}
The evaluation model includes a mapping $\streammap$ that assigns each input and output stream reference to a stream -- a sequence of values indexed by discrete time steps.
At any time $t\in \time$, a stream may either contain a value or $\bot$, indicating that the stream does not contain a value at that time (e.g., due to pacing or a filter condition not being satisfied).
The $\timemap$ assigns each discrete time step a real-valued timestamp.
\end{definition}
For convenience, we use the following shorthand notation for $\world = (\mathit{streams}, \mathit{times})$:
\begin{itemize}
	\item $\world(t) := \mathit{times}(t)$ gives the real-time timestamp at step $t \in \time$, and
	\item $\world(\sid)(t) := \mathit{streams}(\sid)(t)$ gives the value of stream $\sid \in \Sref$ at time $t\in\time$.
\end{itemize}

\section{Specification}
\label{app:eval_spec}

\begin{lstlisting}
#![frequency="2Hz",bound="2"]
import math
#[deadline="3s"]
input gps_lat_long : (Float64, Float64)
#[deadline="3s"]
input gps_altitude : Float64

#[priority="medium",deadline="3s"]
input barometer_pressure : Float64
#[priority="medium",deadline="3s"]
input barometer_altitude : Float64

output lat := gps_lat_long.0
output long := gps_lat_long.1
output start_lat |@gps_lat_long| := start_lat.offset(by:-1).defaults(to: lat)
output start_long |@gps_lat_long| := start_long.offset(by:-1).defaults(to: long)
output start_altitude |@gps_altitude| := start_altitude.offset(by:-1).defaults(to: gps_altitude)

output distance_to_start := sqrt((lat-start_lat)*(lat-start_lat)
	+ (long-start_long)*(long-start_long))*10000.0
output altitude_above_ground := gps_altitude - start_altitude

output geofence := distance_to_start $\ge$ 8.0

output scheduled_geofence
	#[priority="low"]
    eval |@gps_lat_long| when distance_to_start $\le$ 4.0
						  with geofence
	#[priority="medium"]
    eval |@gps_lat_long| when distance_to_start $\le$ 6.0
						  with geofence
	#[priority="high"]
    eval |@gps_lat_long| with geofence

trigger scheduled_geofence "outside geofence"

output altitude_bound := altitude_above_ground $\ge$ 10.0

output scheduled_altitude_bound
	#[priority="low"]
    eval |@gps_altitude| when altitude_above_ground $\le$ 5.0
						  with altitude_bound
	#[priority="medium"]
    eval |@gps_altitude| when altitude_above_ground $\le$ 7.5
						  with altitude_bound
	#[priority="high"]
    eval |@gps_altitude| with altitude_bound

trigger scheduled_altitude_bound "altitude too high"
\end{lstlisting}

\section{\rtlola Schedulers}
\label{app:schedulers}
A Scheduler is formally defines as:
\begin{definition}[Scheduler]
	Given $\task_i,\task_j \in \Tasks$, we define total orders
	\begin{align*}
		\task_i \le_{DL} \task_j &\Leftrightarrow le(\task_i) + v(\task_i) \le le(\task_j) + v(\task_j)\\
		\task_i \le_{P} \task_j &\Leftrightarrow \begin{cases}
			v(\task_i) \le v(\task_j) & \text{if } v(\task_i) \ne v(\task_j)\\
			le(\task_i) \le le(\task_j) & \text{otherwise}
		\end{cases}\\
		\task_i \le_{DP} \task_j &\Leftrightarrow \begin{cases}
			od(\task_i) \land \neg od(\task_j) & \text{if } od(\task_i) \ne od(\task_j)\\
			v(\task_i) \le v(\task_j) & \begin{aligned}\text{if } &od(\task_i) = od(\task_j) \\&\land v(\task_i) \ne v(\task_j)\end{aligned}\\
			le(\task_i) \le le(\task_j) & \text{otherwise}
		\end{cases}
	\end{align*}
	where $v(\task)$, $le(\task)$, and $od(\task)$ are shorthand notations for stream-accesses to the helper streams holding the schedule value, the time of last evaluation and their overdue status respetively.

	Given a schedule $\staticSchedule$, a bound $B$, and a total order $\le$, we define
	\[
		PS_{\staticSchedule,B,\le} = \takeNextEvent_B(\mathit{sort}_{\le}(\Tasks))
	\]
	with 
	\begin{align*}
		&\takeNextEvent_B(\tasks, cur) =\begin{cases}
			\takeNextEvent_B(tail, cur \cup task_1) & \text{if } \tasks \ \task_1 \circ tail \land |cur \cup \task_1| < B\\
			\setCond{\task \in \Tasks}{\task \subseteq cur} &\text{otherwise}
		\end{cases}
	\end{align*}
	The schedulers are then defined as $DS_{\staticSchedule,B}=PS_{\staticSchedule,B,\le_{DL}}$, $PS_{\staticSchedule,B}=PS_{\staticSchedule,B,\le_P}$, and $DPS_{\staticSchedule,B}=PS_{\staticSchedule,B,\le_{DP}}$.
\end{definition}

Then, we can first proof that these schedulers only produce valid task sets:
\begin{theorem}[Valid \rtlola Task Sets]
\label{theorem:valid-task-set}
	Given a well-defined specification $\varphi$, an annotation mapping $a$, and a bound $b \ge \max_{\task \in \Tasks} |\task|$ with $B = \simpleBound{b}$, the schedulers $DS_{\staticSchedule_a,B}$, $PS_{\staticSchedule_a,B}$, and $DPS_{\staticSchedule_a,B}$ produce at every time point valid task sets.
\end{theorem}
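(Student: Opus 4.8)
The plan is to reduce all three schedulers to a single template and reason about that. By definition, $DS_{\staticSchedule_a,B}$, $PS_{\staticSchedule_a,B}$, and $DPS_{\staticSchedule_a,B}$ are all of the form $PS_{\staticSchedule_a,B,\le} = \takeNextEvent_B(\mathit{sort}_\le(\Tasks))$ for a suitable total order $\le$ on $\Tasks$ (respectively $\le_{DL}$, $\le_P$, $\le_{DP}$), so it suffices to show that for an arbitrary total order $\le$ the output $\tasks := PS_{\staticSchedule_a,B,\le}(\world,t)$ is a valid task set. I would first unfold the recursion of $\takeNextEvent_B$ and prove, by induction on the length of the remaining task list, the invariant that the accumulator $cur$ is at every step a union of tasks already processed and always stays within the bandwidth $b$ — the guard only extends $cur$ by $\task_1$ when $cur\cup\task_1$ still fits. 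Upon termination the function returns exactly $\tasks = \setCond{\task \in \Tasks}{\task \subseteq cur}$ for this final $cur$, with $|cur| \le b$. This closed form is the crux: everything else follows from it.

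Next I would check the requirements for $\tasks$ to be a valid task set directly from $\tasks = \setCond{\task}{\task \subseteq cur}$. The empty task lies in $\tasks$ since $\emptyset \subseteq cur$; $\tasks$ is downward-closed under the dependency order $\preceq = \subseteq$ because $\task' \subseteq \task \subseteq cur$ gives $\task' \in \tasks$; and $\tasks$ is closed under the \rtlola-specific union rule because $\task_1,\task_2 \in \tasks$ implies $\task_1 \cup \task_2 \subseteq cur$, hence $\task_1 \cup \task_2 \in \tasks$. For realizability, note that the unique input valuation $i$ with $i \models \tasks$ is the one supplying a value to exactly the streams in $cur$ (since $\set{s} \in \tasks \iff s \in cur$); well-definedness of $\varphi$ extends $i$ to a unique evaluation model $\world'$ continuing the current prefix, and by the \rtlola pacing semantics the combined pacing of a task $\task$ fires at $t+1$ iff every input in $\task$ arrives, i.e. iff $\task \subseteq cur$ iff $\task \in \tasks$; thus $\taskSat{\task}{\world'}{t+1}$ for $\task \in \tasks$ and $\taskUnSat{\task}{\world'}{t+1}$ otherwise, and $|\setCond{s \in \inputref}{\world'(s)(t+1) \ne \bot}| = |cur| \le b$, so $\simpleBound{b}$ holds. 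Together this yields $\validTasks(\world', t+1, \tasks)$.

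The step I expect to be the main obstacle is reconciling the purely local, per-task bandwidth check in $\takeNextEvent_B$ with the union-closure obligation: the guard never explicitly verifies that committing to $\task_1$ does not drag in some union $\task_1 \cup \task''$ (with $\task'' \subseteq cur$ already implied) that overshoots the bound. The key lemma to pin down is that this cannot happen, since $\task_1 \cup \task'' \subseteq \task_1 \cup cur$ and the guard bounds precisely $|\task_1 \cup cur|$; this is what makes $\tasks = \setCond{\task}{\task \subseteq cur}$ both the correct characterization of the output and genuinely schedulable within $b$. A secondary point needing care is the precise use of the hypothesis $b \ge \max_{\task \in \Tasks}|\task|$: it ensures no task is individually too large to ever be placed, so the greedy accumulation is never blocked purely for size reasons — which keeps $\tasks$ from collapsing to a trivial set and is exactly the leverage the companion non-stuckness argument in Theorem~\ref{theorem:valid-scheduler} relies on. Beyond these, the remaining work is routine unfolding of the definitions of $\takeNextEvent_B$, $\validTasks$, and the \rtlola task predicates.
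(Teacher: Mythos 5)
Your proposal is correct and takes essentially the same route as the paper: the paper's own proof is a one-sentence appeal to the construction via $\takeNextEvent_B$ and the orders $\le_{DL}$, $\le_{P}$, $\le_{DP}$, and your argument simply spells that out — the closed form $\setCond{\task}{\task \subseteq cur}$ of the output, its downward/union closure and consistency with the inputs reflecting it, and the bandwidth bound on $cur$. No gap beyond what the paper itself leaves implicit.
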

\begin{proof}
	The proof follows directly from the construction of the task sets from the takeUntil definition and the relations $\le_{DL}$, $\le_{P}$, and $\le_{DP}$.
\end{proof}

To proof \Cref{theorem:valid-scheduler}, we proof that each scheduler individually is valid, but first we introduce two side lemmas:
\begin{lemma}\label{lemma:1}
Given a well-defined specification $\varphi$, an evaluation model $\world \in \semantics{\varphi}$, an annotation mapping $a$, and a bound $B = \simpleBound{b}{}$ with $b \ge \max_{\task \in \Tasks} |\task|$, and $\forall \task \in \Tasks. \forall (c,dl) \in \staticSchedule_a(\task). dl > n$ where
\begin{align*}
    n &= \max \setCond{n'}{\forall \task_1 \task_2 \ldots \in \mathit{Permutations}_{|\Tasks|}(\Tasks).splits_B(\task_1 \task_2 \ldots, \varepsilon) = n'}
\end{align*}.
Then for every time $t \in \time$:
\begin{align*}
    &(\forall t' < t. \validTasks(\world, t', DS_{\staticSchedule_a,B}(\world, t')))\\&\rightarrow \mathit{split'}(\mathit{sort}_{\leq_{DL}}(\Tasks)) = e_0\ldots e_m \land \forall \task \in e_j. \mathit{relativeDl}_{\staticSchedule_a}(\task, \world, t) \geq j
\end{align*}, with $\mathit{relativeDl}$ returns the relative timepoint when the least deadline of this task is due and $\mathit{split'}$:
\begin{align*}
        splits'_B(\varepsilon, cur) &= \begin{cases}
            \setCond{\task \in \Tasks}{\task \subseteq cur} \circ \epsilon & \text{if } cur \neq \emptyset\\
            \epsilon & \text{otherwise}
        \end{cases}\\
		splits'_B(\task_0 \circ \tasks, cur) &= \begin{cases}
			\setCond{\task \in \mathit{tail}}{\task \subseteq cur} \circ splits'_B(\task_0 \circ \mathit{tail}, \varepsilon) & \text{if } |cur \cup \task_0| > b\\
			splits'_B(\mathit{tail}, cur \cup \task_0) & \text{otherwise}
		\end{cases}
	\end{align*}
\end{lemma}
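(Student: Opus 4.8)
The plan is to proceed by induction on $t \in \time$, the induction hypothesis being exactly the claim of the lemma at step $t$. Two global observations drive the argument. First, by the intended reading of $n$ as the worst case of $splits_B$ over all orderings of $\Tasks$, \emph{every} ordering of \emph{every} subset of $\Tasks$ can be packed into at most $n$ events under $B = \simpleBound{b}$; in particular $\mathit{split'}(\mathit{sort}_{\le_{DL}}(\Tasks))$ produces at most $n$ buckets, so the top bucket index $m$ satisfies $m < n$. Second, since every $dl$ occurring in $\staticSchedule_a$ satisfies $dl > n$, a deadline that has just been imposed has, read as a number of remaining events, strictly more than $n$ — hence strictly more than $m$ — events of slack, so its relative deadline strictly exceeds every bucket index. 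Here I use that events are emitted at a fixed rate, so that $\mathit{relativeDl}$ is a genuine event count; this is the only place the precise form of $dl > n$ matters.

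\textbf{Base case.} For $t = 0$ the premise $\forall t' < 0.\ \validTasks(\dots)$ is vacuous, so it suffices to verify the deadline property of the partition. At step $0$ no deadline is forcing, since the dynamic deadline schedule needs a condition to have become true at some $t' < 0$; and a deadline set by a condition becoming true at step $0$ itself has relative deadline $> n > m$. Hence $\mathit{relativeDl}_{\staticSchedule_a}(\task,\world,0)$ exceeds every bucket index of $\mathit{split'}(\mathit{sort}_{\le_{DL}}(\Tasks))$, for every $\task$.

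\textbf{Inductive step.} Assume the claim at $t$ and suppose $\forall t' < t+1.\ \validTasks(\world, t', DS_{\staticSchedule_a,B}(\world, t'))$. Restricting to $t' < t$ and invoking the induction hypothesis gives the partition $\mathit{split'}(\mathit{sort}_{\le_{DL}}(\Tasks)) = e_0 \cdots e_m$ at step $t$ with $\forall \task \in e_j.\ \mathit{relativeDl}_{\staticSchedule_a}(\task,\world,t) \ge j$, while the instance $t' = t$ gives $\validTasks(\world, t+1, DS_{\staticSchedule_a,B}(\world, t))$; note $DS_{\staticSchedule_a,B}(\world, t) = \takeNextEvent_B(\mathit{sort}_{\le_{DL}}(\Tasks))$ is exactly the first bucket $e_0$, so the tasks satisfied at $t+1$ are precisely those of $e_0$. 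Now I would fix an arbitrary task $\task$ in bucket $e'_{j'}$ of the step-$(t+1)$ partition, set $d = \mathit{relativeDl}_{\staticSchedule_a}(\task,\world,t+1)$, and show $j' \le d$. If $d \ge n$ this is immediate since $\mathit{split'}$ has at most $n$ buckets. Otherwise, consider $S_d := \setCond{\task'}{\mathit{relativeDl}_{\staticSchedule_a}(\task',\world,t+1) \le d}$. Every task that is evaluated at $t+1$ (i.e.\ in $e_0$), or whose only deadline stems from a condition becoming true at step $t$ or $t+1$, or that has no active deadline, has relative deadline $> n-1 \ge d$ at step $t+1$ — so it is not in $S_d$; hence each member of $S_d$ was unevaluated at $t+1$ and already under a deadline at $t$. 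Such a task, lying in some bucket $e_k$ with $k \ge 1$ at step $t$, has by the hypothesis relative deadline $\ge k$ at $t$, hence $\ge k-1$ after one further event, so $k \le d+1$; thus $S_d \subseteq e_1 \cup \cdots \cup e_{d+1}$. Since $\le_{DL}$ orders by absolute deadline and is therefore unchanged for unevaluated tasks, $S_d$ is a prefix of $\mathit{sort}_{\le_{DL}}(\Tasks)$ at step $t+1$, in the same relative order it had within $e_1 \cdots e_{d+1}$ at step $t$; each $e_k$ fit into a single event at step $t$, and the next-fit packing underlying $\mathit{split'}$ never uses more buckets on a subsequence (or a prefix) than on the sequence it was derived from, so $\mathit{split'}$ covers $S_d$ with at most $d+1$ buckets. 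As $\task \in S_d$, it therefore ends up in a bucket of index at most $d$, i.e.\ $j' \le d$. This is the claim at $t+1$.

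\textbf{Main obstacle.} The crux is the re-bucketing step of the inductive case: making precise that the greedy, next-fit style of $\mathit{split'}$ packs the "subsequence of old buckets" $S_d$ into at most $d+1$ events. This rests on (a) the shift-invariance of $\le_{DL}$ on unevaluated tasks, so that relative orders carry over from $t$ to $t+1$; (b) the monotonicity of next-fit under deleting items and truncating its input; and (c) the fact that $n$ bounds the event count for every ordering, which both gives $m < n$ and makes every freshly-imposed or reset deadline clear all bucket indices. Subsidiary points — reconciling the real-valued $dl$ with the event-count reading of $\mathit{relativeDl}$ under a fixed event rate, and the treatment of tasks with no active deadline and of tasks that are proper subsets of already-packed input sets — are routine once the core argument is fixed.
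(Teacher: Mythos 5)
Your proposal is correct and shares the paper's overall skeleton -- induction on $t$ with the lemma itself as invariant, the same base case ($dl > n$ plus the observation that $\mathit{split'}$ produces at most $n$ buckets), and the same use of $\validTasks$ to conclude that exactly the tasks of $e_0$ are satisfied at $t+1$ -- but the inductive step is argued by a genuinely different mechanism. The paper asserts that after evaluating $e_0$ the new packing is (essentially) the rotation $e_1 \ldots e_m e_0$, and tracks that relative deadlines and bucket positions both drop by one; this is only morally accurate, since the greedy repacking on the new $\le_{DL}$-order may merge parts of old buckets, which the paper glosses over (harmlessly, since merging only lowers indices). You instead bound the index of an arbitrary task directly: the set $S_d$ of tasks with relative deadline at most $d$ at $t+1$ consists of unevaluated tasks with unchanged absolute deadlines, is therefore a prefix of the new sort order contained in $e_1 \cup \cdots \cup e_{d+1}$, and its intersections with those old buckets give a consecutive partition into at most $d+1$ feasible blocks; since block feasibility here is a union-cardinality bound (hence downward closed), the greedy next-fit packing is optimal among consecutive partitions, so $\task$ lands at index at most $d$. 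This buys robustness against the repacking issue and makes the fresh-deadline and evaluated-task cases explicit, at the price of having to state and prove the greedy/next-fit monotonicity fact for union-based (non-additive) sizes -- the obstacle you correctly identify, and which does hold by the downward-closure argument. Both proofs equally rely on the fixed event rate to read $\mathit{relativeDl}$ as a count of remaining events; the paper leaves this implicit as well.
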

\begin{proof}
    We proof this lemma by induction over t:
    \begin{itemize}
        \item[t=0] Since $\forall \task \in \Tasks. \forall (c,dl) \in \staticSchedule_a(\task). dl > n$, we know from the assumption with the minimal deadlines
        \[
        \forall \task \in \Tasks. \mathit{relativeDl}_{\staticSchedule_a}(\task, \world, t) > n
        \]
        since $\task$ is either a new task, where the assumption holds or does not have yet a relative deadline.
    In addition, by construction of $\mathit{split'}$, $\mathit{split'}(\mathit{sort}_{\leq_{DL}}(\Tasks))$ returns a sequence with a size of at most $n$.
    So $\task \in e_j. \mathit{relativeDl}_{\staticSchedule_a}(\task, \world, t) > n \geq j$.
        \item[$t \rightarrow t + 1$] From the induction, we know at time $t$
        \[
        \mathit{split'}(\mathit{sort}_{\leq_{DL}}(\Tasks)) = e_0\ldots e_m \land \forall \task \in e_j. \mathit{relativeDl}_{S_a}(\task, \world, t) \geq j
        \]
        Since $\world$ follows the earliest deadline first algorithm, the inputs at time $t + 1$ follow $e_0$, so $\forall\task \in e_0. \taskSat{\task}{\world}{t + 1}$. If we assume that at $t + 1$ no new deadlines are added, then at time $t + 1$:
        \[
        \mathit{split'}(\mathit{sort}_{\leq_{DL}}(\Tasks)) = e_1\ldots e_me_0
        \] due to the ordering $\le_{DL}$ and by construction it follows directly:
        \[
            \forall \task \in e_j. \mathit{relativeDl}_{\staticSchedule_a}(\task, \world, t) \geq j
        \], since the relative deadline is reduced by one step but so also the position.
        If new deadlines are added at time $t + 1$, then 
        \[
            \forall \task \in e_j. \mathit{relativeDl}_{\staticSchedule_a}(\task, \world, t) \geq \min(n,j) \geq j
        \], since either the task $\task$ is assigned with a new deadline then $\mathit{relativeDl}_{\staticSchedule_a}(\task, \world, t) > n$ or the task keeps its old deadline. Then, this relative deadline is
        \begin{itemize}
            \item either smaller than $n$, then because of the ordering at time $t$, $\task \in e_i$ and $\mathit{relativeDl}_{\staticSchedule_a}(\task, \world, t) \geq i$ and at time $t + a$ it holds $\task \in e'_i = e_{i + 1}$, so $\mathit{relativeDl}_{\staticSchedule_a}(\task, \world, t) \geq i$.
            \item greater or equal $n$, then $\mathit{relativeDl}_{\staticSchedule_a}(\task, \world, t) > n \geq i$
        \end{itemize}  
    \end{itemize}
\end{proof}

\begin{lemma}
\label{lemma:2}
Given a well-defined specification $\varphi$, a world $\world \in \semantics{\varphi}$, an annotation mapping $a$, and a bound $B = \simpleBound{b}{}$ with $b \ge \max_{\task \in \Tasks} |\task|$, and $\forall \task \in \Tasks. \forall (c,dl) \in \staticSchedule_a(\task). dl > n$ where
\begin{align*}
    n &= \max \setCond{n'}{\forall \task_1 \task_2 \ldots \in \mathit{Permutations}_{|\Tasks|}(\Tasks).splits_B(\task_1 \task_2 \ldots, \varepsilon) = n'}
\end{align*}.
Then $\world \in \semantics{\dynamicSchedule_{\staticSchedule_a}}$ if $\forall t. \validTasks(\world, t, DS_{\staticSchedule_a,B}(\world, t')))$.
\end{lemma}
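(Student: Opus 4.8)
\noindent The plan is to reduce the claim to the structural invariant established in \Cref{lemma:1}. The first observation is that the dynamic deadline constraint $\dynamicSchedule_{\staticSchedule_a}$ derived from a static $\IntervalSchedule$ never outputs $\scheduleNo$: its definition has exactly the two branches $\scheduleYes$ and $\scheduleMaybe$. Hence, unfolding $\semantics{\dynamicSchedule_{\staticSchedule_a}}$, proving $\world \in \semantics{\dynamicSchedule_{\staticSchedule_a}}$ reduces to a single obligation: for every $t \in \time$ and every $\task \in \Tasks$ with $\dynamicSchedule_{\staticSchedule_a}(\world, t)(\task) = \scheduleYes$ we must show $\taskSat{\task}{\world}{t+1}$; the $\scheduleMaybe$ case is vacuous. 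Fix such a $t$ and $\task$ and unfold the $\scheduleYes$ branch: there are $(c, dl) \in \staticSchedule_a(\task)$ and $t' < t$ with $c \evalTo{\world}{t'} \top$, with $\taskUnSat{\task}{\world}{t''}$ for all $t'' \in (t', t]$, and with $\world(t+2) > \world(t') + dl$. In words, the most recent activation of a deadline condition for $\task$ happened at $t'$, the task has not been served since, and its absolute deadline $\world(t') + dl$ falls strictly before the timestamp of step $t+2$; so step $t+1$ is the \emph{last} step at which $\task$ can still be served in time. Phrased in the bookkeeping of \Cref{lemma:1}, $\mathit{relativeDl}_{\staticSchedule_a}(\task, \world, t)$ is strictly smaller than one step.

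Next I would invoke \Cref{lemma:1}. Its side conditions are precisely those assumed here -- the bandwidth bound $b \ge \max_{\task \in \Tasks}|\task|$ and the deadline slack $dl > n$ -- and its premise, that $\world$ has followed the deadline scheduler's valid task sets at all earlier steps, is an instance of the standing hypothesis of the present lemma. Thus $splits'_B(\mathit{sort}_{\le_{DL}}(\Tasks)) = e_0 \ldots e_m$ with $\mathit{relativeDl}_{\staticSchedule_a}(\task', \world, t) \ge j$ for every $\task' \in e_j$. Combining this with the previous paragraph: $\task$ cannot lie in any $e_j$ with $j \ge 1$, since that would force its relative deadline to be at least one full step, contradicting that it must be served at $t+1$; hence $\task \in e_0$. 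By the greedy definition of $splits'_B$ the cumulative input set of $e_0$ has size at most $b$ (and $b \ge \max_{\task \in \Tasks}|\task|$ ensures this packing does not stall on an oversized task), so $\takeNextEvent_B(\mathit{sort}_{\le_{DL}}(\Tasks))$ fills the next event with a $\le_{DL}$-prefix containing all of $e_0$. Consequently $\task \in DS_{\staticSchedule_a,B}(\world, t)$.

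Finally, by the standing hypothesis $\world$ realizes at step $t+1$ exactly the task set selected at step $t$, i.e.\ $\validTasks(\world, t+1, DS_{\staticSchedule_a,B}(\world, t))$ holds, and by \Cref{theorem:valid-task-set} this selection is moreover a valid \rtlola task set; in particular, by the first conjunct of $\validTasks$, $\taskSat{\task}{\world}{t+1}$ holds for every task it contains, hence for $\task$. As $t$ and $\task$ were arbitrary, the single obligation in $\semantics{\dynamicSchedule_{\staticSchedule_a}}$ is discharged, so $\world \in \semantics{\dynamicSchedule_{\staticSchedule_a}}$.

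The step I expect to be the main obstacle is reconciling the two ways of measuring ``due next'': the timestamp-level condition $\world(t+2) > \world(t') + dl$ built into $\dynamicSchedule_{\staticSchedule_a}$, and the event-count abstraction ($splits'_B$, $\mathit{relativeDl}$) used by \Cref{lemma:1}. One has to argue carefully why ``relative deadline $\ge j$'' rules out membership in $e_0$ exactly when the deadline distance drops below one step -- keeping track of the fixed event frequency -- and handle the case in which several pairs of $\staticSchedule_a(\task)$ are simultaneously active (the governing one being selected by the $\max$/$\argmax$ bookkeeping inherited from the construction of $\staticSchedule_a$ out of the annotations $a$). One also has to check that the greedy prefix returned by $\takeNextEvent_B$ is never truncated before $e_0$ is exhausted, which is exactly where $b \ge \max_{\task \in \Tasks}|\task|$ enters, and that, thanks to $\taskUnSat{\task}{\world}{t''}$ on $(t', t]$, the last-evaluation timestamp driving $\le_{DL}$ agrees with $\world(t')$ in the relevant sense. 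The remaining details (monotonicity of the time map, and that the returned set is automatically closed under the superset and union conditions of $\validTasks$) are routine.
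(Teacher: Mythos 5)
Your proposal follows essentially the same route as the paper's own proof: unfold $\semantics{\dynamicSchedule_{\staticSchedule_a}}$, note that only the $\scheduleYes$ branch creates an obligation, identify that branch with the task being due at step $t+1$ (relative deadline $1$), invoke the invariant of \Cref{lemma:1} to place the task in $e_0$, and conclude $\taskSat{\task}{\world}{t+1}$ from the scheduler's greedy construction under $b \ge \max_{\task \in \Tasks}|\task|$ together with the standing $\validTasks$ hypothesis. Your additional bookkeeping (the explicit appeal to \Cref{theorem:valid-task-set} and the flagged timestamp-versus-event-count reconciliation) only elaborates steps the paper leaves implicit, so the argument is correct and not a genuinely different approach.
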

\begin{proof}
    For this proof, we unroll the definition of $\world \in \semantics{\dynamicSchedule_{\staticSchedule_a}}$, so we have to proof:
    \[
	\forall t \in \time. \forall \task \in \Tasks. \begin{cases}
		\taskSat{\task}{\world}{t + 1} &\text{if } \dynamicSchedule_{\staticSchedule_a}(\world, t)(\task) = \scheduleYes\\
		\top &\text{if } \dynamicSchedule_{\staticSchedule_a}(\world, t)(\task) = \scheduleMaybe\\
		\taskUnSat{\task}{\world}{t + 1} &\text{if } \dynamicSchedule_{\staticSchedule_a}(\world, t)(\task) = \scheduleNo
	\end{cases}.
\] with 

\begin{align*}
	\dynamicSchedule_{\Schedule_a}(\world, t)(\task) = \begin{cases}
		\scheduleYes & \text{if } \exists (c, dl) \in \staticSchedule_a(\task) \land \exists t' < t. c \evalTo{\world}{t'} \top\\
		& \phantom{\text{if }} \land \forall t'' \in (t', t]. \taskUnSat{\task}{\world}{t''} \land \world(t+2) > \world(t') + dl\\
		\scheduleMaybe & \text{otherwise}
	\end{cases}
\end{align*}
    Given an arbitrary but fix $t$ and $\task$, then 
    \[
    \dynamicSchedule_{\staticSchedule_a}(\world, t)(\task) = \scheduleYes \rightarrow \taskSat{\task}{\world}{t + 1} 
    \]
    $\dynamicSchedule_{\staticSchedule_a}(\world, t)(\task) = \scheduleYes$ is by construction equivalent to $\mathit{relativeDl}_{\staticSchedule_a}(\task, \world, t) = 1$ and from \Cref{lemma:1}, we know $\forall \task \in e_j. \mathit{relativeDl}_{\staticSchedule_a}(\task, \world, t) \geq j$, so the fixed $\task \in e_0$ otherwise the relative deadline would not be 1 and by the construction of the scheduler $\forall \task \in e_0. \taskSat{\task}{\world}{t + 1}$ 
\end{proof}

\begin{theorem}[Valid Deadline Scheduler]
    \label{theorem:valid-deadline-scheduler}
    Given a well-defined specification $\varphi$, an annotation mapping $a$, and a bound $B = \simpleBound{b}{}$.
    The scheduler $DS_{\staticSchedule_a,B}$ is valid if $\forall \task \in \Tasks. \forall (c,dl) \in \staticSchedule_a(\task). dl > n$ and $b \ge \max_{\task \in \Tasks} |\task|$ with 
    \begin{align*}
    n &= \min \setCond{n'}{\forall \task_1 \task 2 \ldots \in \mathit{Permutations}_{|\Tasks|}(\Tasks).splits_B(\task_1 \task_2 \ldots, \varepsilon) = n'}
\end{align*}.
\end{theorem}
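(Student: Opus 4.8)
The plan is to verify the two obligations of \Cref{def:valid_scheduler} for $DS_{\staticSchedule_a,B}$. Obligation~(2), that the scheduler does not indefinitely select the empty family of tasks, is immediate: $DS_{\staticSchedule_a,B}(\world,t') = \takeNextEvent_B(\mathit{sort}_{\le_{DL}}(\Tasks))$ always returns a non-empty family, and since $b \ge \max_{\task \in \Tasks}|\task|$ the most urgent task of the sorted list always fits into a single event, so $t' = t+1$ suffices.

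For obligation~(1) I would argue constructively. Fix $\world \in \semantics{(\varphi,\staticSchedule_a,B)}$, a time $t$, the set $\tasks = DS_{\staticSchedule_a,B}(\world,t)$, and an input $i$ with $\prefix_{DS_{\staticSchedule_a,B}}(\world,t)$ and $i \models \tasks$. I would define an input trace $I'$ that coincides with $\world$ on $[0,t]$, sets $I'(t+1) = i$, and for $t' > t+1$ -- with $\world'$ on $[0,t']$ already fixed from $I'$ on $[0,t']$ by well-definedness -- lets $I'(t'+1)$ be any input reflecting $DS_{\staticSchedule_a,B}(\world',t')$; such an input exists because, by \Cref{theorem:valid-task-set}, the scheduler always emits valid task sets. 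Well-definedness of $\varphi$ then yields a unique $\world' \in \semantics{\varphi}$ with these inputs, and by construction $\world'[..t] = \world[..t]$ and $\world'[t+1] = i$.

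It then remains to check $\world' \in \semantics{(\varphi,\staticSchedule_a,B)}$ and $\validTasks(\world',t+1,\tasks)$. The bandwidth predicate $\simpleBound{b}$ holds at every step: on $[0,t]$ because $\world'$ equals $\world$, and from step $t+1$ on because every input reflects a valid task set whose union has at most $b$ elements (\Cref{theorem:valid-task-set} together with the greedy bound inside $\takeNextEvent_B$). For the scheduling constraint I would show that $\world'$ follows $DS_{\staticSchedule_a,B}$ at every step: on $[0,t]$ this is $\prefix_{DS_{\staticSchedule_a,B}}(\world,t)$ carried over to $\world'$; at step $t+1$ it is $\validTasks(\world',t+1,\tasks)$, which follows from $i \models \tasks$ and validity of $\tasks$ (the active inputs of $i$ are exactly $\bigcup \tasks$, hence the satisfied tasks are exactly the subsets of $\bigcup \tasks$, and the union-/upward-closure required by $\validTasks$ is inherited from $\tasks$); and for $t' > t+1$ it holds by construction of $I'$. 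This discharges the hypothesis of \Cref{lemma:2}, giving $\world' \in \semantics{\dynamicSchedule_{\staticSchedule_a}}$; together with the two points above, $\world' \in \semantics{(\varphi,\staticSchedule_a,B)}$, and $\validTasks(\world',t+1,\tasks)$ is exactly what was just verified.

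I expect the real work to lie not in this theorem but in the lemmas it depends on, above all in \Cref{lemma:1}: one must show that along a $DS_{\staticSchedule_a,B}$-run the greedy splitting of the sorted task list into event-blocks $e_0 \ldots e_m$ maintains the invariant that every task in block $e_j$ still has at least $j$ events of slack before its deadline. This is precisely where the side condition $dl > n$ enters -- it ensures a newly (re)activated deadline is never tighter than the worst-case number $n$ of events required to clear an arbitrary permutation of all tasks under bandwidth $b$, so the earliest-deadline-first policy can never be forced past a deadline. Once \Cref{lemma:1} (and hence \Cref{lemma:2}) is available, the present theorem is essentially bookkeeping: well-definedness provides the continuation, \Cref{theorem:valid-task-set} provides bandwidth-safety and task-set closure, and \Cref{lemma:2} provides conformance to the dynamic schedule.
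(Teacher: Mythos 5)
Your proposal follows essentially the same route as the paper's proof: you construct $\world'$ agreeing with $\world$ up to $t$, setting the inputs at $t+1$ to $i$, and continuing by following the scheduler, then discharge membership in $\semantics{\varphi}$ via well-definedness, conformance to $\dynamicSchedule_{\staticSchedule_a}$ via \Cref{lemma:2}, and bandwidth plus $\validTasks$ via \Cref{theorem:valid-task-set}, with obligation~(2) from the scheduler never returning an empty task set. Your write-up is in fact somewhat more explicit than the paper's sketch (e.g., spelling out how the continuation trace $I'$ is built and why the prefix condition transfers), and your observation that the substantive work, including the role of $dl > n$, sits in \Cref{lemma:1} matches the paper's structure.
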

\begin{proof}
    By definition, we have to proof the following conditions:
    \begin{enumerate}
		\item
		$\quad\begin{aligned}[t]
			&\forall \world \in \semantics{(\varphi, \dynamicSchedule_{\staticSchedule_a}, B)}. \forall t \in \time. \forall \tasks \subseteq \Tasks. \forall i \in \inputValues.\\
			&\quad \prefix_{DS_{\staticSchedule_a,B}}(\world, t) \land DS_{\staticSchedule_a, B}(\world, t) = \tasks \land i \models \tasks\\
			&\quad\rightarrow \exists \world' \in \semantics{(\varphi, S_a, B)}. \validTasks(\world', t+1, \tasks) \land \world'[..t] = \world[..t] \land \world'[t+1] = i.
		\end{aligned}$
		\item 
		$\quad\begin{aligned}[t]
			&\forall \world \in \semantics{(\varphi, \dynamicSchedule_{\staticSchedule_a}, B)}. \forall t \in \time. \exists t' > t. |DS_{\staticSchedule_a,B}(\world, t')| \ge 1
		\end{aligned}$
	\end{enumerate}
    We proof each condition separately:
    \begin{enumerate}
        \item We proof this condition by constructing $\world'$, with $\world'[..t] = \world[..t]$, $\world'[t+1] = i$, and $\world'$ follows the scheduler $DS_{S_a, B}$. Then we have to proof:
        \begin{itemize}
            \item $\world \in \semantics{\varphi}$ follows from the well-definedness of $\varphi$
            \item $\world \in \semantics{\dynamicSchedule_{\staticSchedule_a}}$ follows from \Cref{lemma:2}
            \item $\forall t \in \time. B(\world, t)$ and $\validTasks(\world', t+1, \tasks)$ follows from \Cref{theorem:valid-task-set}.
            
        \end{itemize}
        \item This condition follows directly from the scheduler $DS_{\staticSchedule_a, B}$.
        By taking the task with the earliest deadlines and the partial order $\leq_{DL}$ over the tasks, the task list is never empty.
    \end{enumerate}
\end{proof}

\begin{lemma}
\label{lemma:3}
Given a well-defined specification $\varphi$, a world $\world \in \semantics{\varphi}$, an annotation mapping $a$, and a bound $B = \simpleBound{b}{}$ with $b \ge \max_{\task \in \Tasks} |\task|$.
Then $\world \in \semantics{\dynamicSchedule_{\staticSchedule_a}}$ if $\forall t. \validTasks(\world, t, PS_{\staticSchedule_a,B}(\world, t')))$.
\end{lemma}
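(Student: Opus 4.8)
The plan is to follow the blueprint of the proof of \Cref{lemma:2}, but the priority case is considerably lighter: because the priority schedule records nothing from earlier cycles, there is no need for an analogue of \Cref{lemma:1}, and the whole argument lives inside a single time step. I would start by unrolling $\world\in\semantics{\dynamicSchedule_{\staticSchedule_a}}$ into its defining case distinction and fixing an arbitrary $t\in\time$ and $\task\in\Tasks$. The dynamic priority constraint only ever returns $\scheduleYes$ or $\scheduleMaybe$, and the $\scheduleMaybe$ obligation is discharged by $\top$, so the single remaining goal is $\dynamicSchedule_{\staticSchedule_a}(\world,t)(\task)=\scheduleYes \Rightarrow \taskSat{\task}{\world}{t+1}$. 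Assuming the $\scheduleYes$ branch, I would extract its witnesses: a current priority $(c_1,p_1)=\lastPrio_{\staticSchedule_a}(\task,\world,t)$, a task $\task'$ with $(c_2,p_2)=\lastPrio_{\staticSchedule_a}(\task',\world,t)$, $\taskSat{\task'}{\world}{t+1}$, $p_1>p_2$, and no $\task''\succeq\task$ with $\taskSat{\task''}{\world}{t+1}$.

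The core of the argument uses the hypothesis that $\world$ follows $PS_{\staticSchedule_a,B}$, i.e.\ $\validTasks(\world,t+1,PS_{\staticSchedule_a,B}(\world,t))$ at every step. Unfolding $PS_{\staticSchedule_a,B}=\takeNextEvent_B(\mathit{sort}_{\leq_P}(\Tasks))$, the scheduled set is $\{\rho\in\Tasks \mid \rho\subseteq cur\}$, where $cur$ is the union of the $\leq_P$-descending (highest priority first) prefix of $\Tasks$ that still fits into $B$; the bound $b\ge\max_{\task\in\Tasks}|\task|$ guarantees that each task fits on its own, so this greedy packing is well-behaved, and $\world\in\semantics{\varphi}$ guarantees that the helper streams compute $v(\task)$ faithfully, so that $\mathit{sort}_{\leq_P}$ really orders by the priorities returned by $\lastPrio_{\staticSchedule_a}$. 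From $\taskSat{\task'}{\world}{t+1}$ and $\validTasks$ we get $\task'\subseteq cur$. The decisive step is to conclude $\task\subseteq cur$: since $p_1>p_2$, $\task$ is ranked strictly above $\task'$ in $\mathit{sort}_{\leq_P}$, and by the definition of $\staticSchedule_a$ — each set's schedule value is the maximum over the annotations of all streams whose pacing it subsumes — together with the union-closure strengthening of valid task sets from \Cref{sec:rtlola_scheduling}, the $\leq_P$-prefix that already covers the inputs realising $p_2$ must also cover the pacing sets realising $p_1$, so $\task\subseteq cur$. Hence $\task\in PS_{\staticSchedule_a,B}(\world,t)$ and $\taskSat{\task}{\world}{t+1}$; since $\task\succeq\task$, this also shows the $\scheduleYes$ branch to be vacuous under the hypothesis, and $\world\in\semantics{\dynamicSchedule_{\staticSchedule_a}}$ follows.

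I expect this decisive step — that the greedy, priority-ordered placement performed by $\takeNextEvent$ never skips a task that the dynamic constraint forces — to be the main obstacle. In the individual-stream model ($\Tasks=\Sref$) it would be immediate, but in the $\powerset{\inputref}$ model a strictly lower-priority task can be a subset of the union of unrelated higher-priority tasks, so the argument genuinely needs the interplay between the dependency order $\preceq$, the maximum taken in the construction of $\staticSchedule_a$, and the union-closure requirement on valid task sets; pinning this down, including the corner cases where several pacing sets jointly realise one priority level, is where the real work sits. The remaining side conditions — the downward- and union-closure parts of $\validTasks$ and the bandwidth bound — come from \Cref{theorem:valid-task-set}, and this lemma then plugs into the validity proof for $PS_{\staticSchedule_a,B}$ exactly as \Cref{lemma:2} plugs into \Cref{theorem:valid-deadline-scheduler}.
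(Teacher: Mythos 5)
Your proposal follows essentially the same route as the paper's proof: unroll $\semantics{\dynamicSchedule_{\staticSchedule_a}}$, fix $t$ and $\task$, reduce everything to the $\scheduleYes$-case, and discharge it by combining the priority-ordered greedy selection of $PS_{\staticSchedule_a,B}$ with the max-construction of $\staticSchedule_a$ and the dependency/union-closure of valid task sets, exactly as the paper does (the paper phrases your ``decisive step'' as the case $\task = \task_1 \cup \task_2$ being triggered, with the priority of $\task$ at least that of its parts). The step you flag as the main obstacle is handled in the paper's own sketch only at the same level of assertion, so your argument is on par with, and structurally identical to, the published proof.
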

\begin{proof}
    By unrolling the defintion of $\world \in \semantics{\dynamicSchedule_{\staticSchedule_a}}$, we get:
     \[
	\forall t \in \time. \forall \task \in \Tasks. \begin{cases}
		\taskSat{\task}{\world}{t + 1} &\text{if } \dynamicSchedule_{\staticSchedule_a}(\world, t)(\task) = \scheduleYes\\
		\top &\text{if } \dynamicSchedule_{\staticSchedule_a}(\world, t)(\task) = \scheduleMaybe\\
		\taskUnSat{\task}{\world}{t + 1} &\text{if } \dynamicSchedule_{\staticSchedule_a}(\world, t)(\task) = \scheduleNo
	\end{cases}.
\] with 

\begin{align*}
	\dynamicSchedule(\world, t)(\task) = \begin{cases}
	\scheduleYes & \text{if } (c_1, p_1) = \lastPrio_{\staticSchedule}(\task, \world, t) \land \exists \task'. (c_2, p_2) = \lastPrio_{\staticSchedule}(\task', \world, t)\\
	& \phantom{if } \land \taskSat{\task'}{\world}{t+1} \land p_1 > p_2 \land \neg\exists \task'' \succeq \task. \taskSat{\task''}{\world}{t+1}\\
	\scheduleMaybe & \text{otherwise}
	\end{cases}
\end{align*}
Given an arbitrary but fix $t \in \time$ 
    and $\task \in \Tasks$ then, we have to proof:
    \[
        \dynamicSchedule_{\staticSchedule_a}(\world,t)(\task) = \scheduleYes \rightarrow \taskSat{\task}{t + 1}{\world}
    \]
    In priority schedules $
        \dynamicSchedule_{\staticSchedule_a}(\world,t)(\task) = \scheduleYes$ is true if there exists a task $\task'$ where the world satisfies this task at the next timestep and this task has a lower priority as $\task$ and $\task'$ is not triggered by a task dependency.
        Since the world follows $PS$ and $PS$ picks the tasks according to their priority, then the world als satisfies this task $\task$ at the next timestamp or $\task = \task_1 \cup \task_2$ and is for this triggered.
        However, due to the construction of $\staticSchedule_a$ the priority of $\task$ is at least the priority of $ \task_1$ and $\task_2$.
        For this, if $\task_1$ and $\task_2$ are in the next taskset so is also $\task$ and the condition holds.
\end{proof}

\begin{theorem}[Valid Priority Scheduler]
    \label{theorem:valid-priority-scheduler}
    Given a well-defined specification $\varphi$, an annotation mapping $a$,  and a bound $B = \simpleBound{b}{}$.
    The scheduler $PS_{\staticSchedule_a,B}$ is valid $b \ge \max_{\task \in \Tasks} |\task|$.
\end{theorem}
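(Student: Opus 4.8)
The plan is to mirror the proof of \Cref{theorem:valid-deadline-scheduler} almost line for line, replacing the deadline-specific ingredients \Cref{lemma:1,lemma:2} by the priority-specific \Cref{lemma:3}, and observing that no hypothesis on deadlines is needed because the priority scheduler keeps no state between cycles. Concretely, I would unfold \Cref{def:valid_scheduler} into its two conjuncts and discharge each in turn, with the bound $b \ge \max_{\task \in \Tasks}|\task|$ entering only through the greedy routine $\takeNextEvent_B$.

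For the first condition -- the scheduler never gets stuck -- fix $\world \in \semantics{(\varphi, \dynamicSchedule_{\staticSchedule_a}, B)}$, a time $t$, the set $\tasks = PS_{\staticSchedule_a,B}(\world, t)$, and an input $i \in \inputValues$ with $i \models \tasks$, and assume $\prefix_{PS_{\staticSchedule_a,B}}(\world, t)$. I would exhibit the witness $\world'$ given by $\world'[..t] = \world[..t]$, $\world'[t+1] = i$, and, for every step $t'' > t+1$, by letting $\world'$ follow $PS_{\staticSchedule_a,B}$: request exactly the input streams occurring in $\bigcup PS_{\staticSchedule_a,B}(\world', t''-1)$ with arbitrary type-correct values, leave the remaining input streams at $\bot$, and use any strictly monotone extension of the timestamps. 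It then remains to verify $\world' \in \semantics{(\varphi, \staticSchedule_a, B)}$ and $\validTasks(\world', t+1, \tasks)$. Membership in $\semantics{\varphi}$ is immediate from well-definedness of $\varphi$ applied to the input trace just fixed. Membership in $\semantics{\dynamicSchedule_{\staticSchedule_a}}$ is exactly \Cref{lemma:3}, whose hypothesis that $\world'$ validly follows $PS_{\staticSchedule_a,B}$ everywhere holds on the prefix by assumption, at $t+1$ because $i \models \tasks$ forces $\world'[t+1]$ to request precisely the inputs selected at $t$, and beyond $t+1$ by construction. The predicate $B = \simpleBound{b}$ holds at all times because $\takeNextEvent_B$ requests at most $b$ inputs once $b \ge \max_\task|\task|$. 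Finally $\validTasks(\world', t+1, \tasks)$ follows from \Cref{theorem:valid-task-set} -- so $\tasks$ is a valid task set, in particular union-closed and of the form $\powerset{c}$ for its requested set $c$ -- together with $i \models \tasks$, which transfers ``$\task \in \tasks$'' to $\taskSat{\task}{\world'}{t+1}$ and ``$\task \notin \tasks$'' to $\taskUnSat{\task}{\world'}{t+1}$, using that \rtlola tasks constrain only input streams.

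For the second condition I would simply observe that $PS_{\staticSchedule_a,B}$ sorts $\Tasks$ by priority and fills each event greedily, so, since $b \ge \max_\task|\task|$, the top-priority task fits and the selected task set is non-empty at every step; in particular $|PS_{\staticSchedule_a,B}(\world, t')| \ge 1$ for some $t' > t$. This uses only that $\Tasks \ne \emptyset$, which holds because the specification has at least one input stream.

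The only genuine content sits in the step already isolated as \Cref{lemma:3}: because the greedy scheduler requests a \emph{set} of inputs, any task whose inputs all lie in that set -- including tasks that were never explicitly selected and may carry a lower priority -- counts as satisfied at the next step, and one must rule out that this ever produces a $\scheduleYes$ obligation of $\dynamicSchedule_{\staticSchedule_a}$ that $\world'$ violates. The levers are that the \rtlola dependency order is $\subseteq$, so a task is ``at least as scheduled'' as each of its subtasks, and that $\staticSchedule_a(\task)$ is built as the \emph{maximum} priority over all streams whose pacing is subsumed by $\task$, so a combined-pacing task never has strictly lower priority than one of its constituents. Since \Cref{lemma:3} is already in hand, what remains is the mechanical assembly sketched above; the main risk is getting the ``satisfied-as-a-side-effect'' bookkeeping in the $\neg\exists \task'' \succeq \task.\ \taskSat{\task''}{\world}{t+1}$ clause to line up, which is precisely what \Cref{lemma:3} settles.
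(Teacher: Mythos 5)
Your proposal is correct and follows essentially the same route as the paper's own proof: unfold \Cref{def:valid_scheduler}, construct the witness $\world'$ that agrees on the prefix, takes $i$ at $t+1$, and thereafter follows $PS_{\staticSchedule_a,B}$, then discharge the three membership conditions via well-definedness of $\varphi$, \Cref{lemma:3}, and \Cref{theorem:valid-task-set}, with the bound $b \ge \max_{\task \in \Tasks}|\task|$ entering only through the greedy selection, and liveness following from the ordering $\le_P$ never yielding an empty selection. Your extra bookkeeping (explicit extension of $\world'$ beyond $t+1$, the transfer of $i \models \tasks$ to $\validTasks$) only spells out details the paper leaves implicit.
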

\begin{proof}
    By definition, we have to proof the following conditions:
    \begin{enumerate}
		\item
		$\quad\begin{aligned}[t]
			&\forall \world \in \semantics{(\varphi, \dynamicSchedule_{\staticSchedule_a}, B)}. \forall t \in \time. \forall \tasks \subseteq \Tasks. \forall i \in \inputValues.\\
			&\quad \prefix_{PS_{\staticSchedule_a,B}}(\world, t) \land PS_{\staticSchedule_a, B}(\world, t) = \tasks \land i \models \tasks\\
			&\quad\rightarrow \exists \world' \in \semantics{(\varphi, \dynamicSchedule_{\staticSchedule_a}, B)}. \validTasks(\world', t+1, \tasks) \land \world'[..t] = \world[..t] \land \world'[t+1] = i.
		\end{aligned}$
		\item 
		$\quad\begin{aligned}[t]
			&\forall \world \in \semantics{(\varphi, \dynamicSchedule_{\staticSchedule_a}, B)}. \forall t \in \time. \exists t' > t. |PS_{\staticSchedule_a,B}(\world, t')| \ge 1
		\end{aligned}$
	\end{enumerate}
    We proof each condition separately:
    \begin{enumerate}
        \item We proof this condition by constructing $\world'$, with $\world'[..t] = \world[..t]$, $\world'[t+1] = i$, and $\world'$ follows the scheduler $PS_{\staticSchedule_a, B}$. Then we have to proof:
        \begin{itemize}
            \item $\world \in \semantics{\varphi}$ follows from the well-definedness of $\varphi$
            \item $\world \in \semantics{\dynamicSchedule_{\staticSchedule_a}}$ follows from \Cref{lemma:3}
            \item $\forall t \in \time. B(\world, t)$ and $\validTasks(\world', t+1, \tasks)$ follows from \Cref{theorem:valid-task-set}.
            
        \end{itemize}
        \item This condition follows directly from the scheduler $PS_{\staticSchedule_a, B}$.
        By taking the task with the earliest deadlines and the partial order $\leq_{P}$ over the tasks, the task list is never empty.
    \end{enumerate}
\end{proof}

\begin{theorem}[Valid Deadline-Priority Scheduler]
    \label{theorem:valid-deadline-priority-scheduler}
    Given a well-defined specification $\varphi$, an annotation mapping $a$,  and a bound $B = \simpleBound{b}{}$.
    The scheduler $DPS_{\staticSchedule_a,B}$ is valid $b \ge \max_{\task \in \Tasks} |\task|$.
\end{theorem}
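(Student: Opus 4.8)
\emph{Proof proposal.} The plan is to follow the template already used for $PS_{\staticSchedule_a,B}$ in \Cref{theorem:valid-priority-scheduler} and to isolate the one genuinely new ingredient. By \Cref{def:valid_scheduler} it suffices to establish the two validity conditions for $DPS_{\staticSchedule_a,B}$. Condition~2 --- the scheduler never indefinitely selects empty task sets --- is immediate exactly as in the priority and deadline cases: $\le_{DP}$ is a total order on $\Tasks$, so $\mathit{sort}_{\le_{DP}}(\Tasks)$ is nonempty, and since $b \ge \max_{\task\in\Tasks}|\task|$ its first element is absorbed by $\takeNextEvent_B$, so the returned set always contains a task. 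The work is therefore in Condition~1.

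For Condition~1 I would fix $\world \in \semantics{(\varphi, \dynamicSchedule_{\staticSchedule_a}, B)}$, a time $t$, the selection $\tasks = DPS_{\staticSchedule_a,B}(\world, t)$, and an input $i \models \tasks$, and construct the witness $\world'$ just as in \Cref{theorem:valid-priority-scheduler}: $\world'[..t] = \world[..t]$, $\world'[t+1] = i$, and $\world'$ follows $DPS_{\staticSchedule_a,B}$ from step $t+1$ on. Then $\world' \in \semantics{\varphi}$ comes from well-definedness of $\varphi$, while $\forall t'.\,B(\world', t')$ and $\validTasks(\world', t+1, \tasks)$ --- including the strengthened RTLola closure $\task_1,\task_2\in\tasks \Rightarrow \task_1\cup\task_2\in\tasks$ --- follow from \Cref{theorem:valid-task-set}, whose sole hypothesis is the bound $b \ge \max|\task|$ we are assuming. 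The one remaining obligation is $\world' \in \semantics{\dynamicSchedule_{\staticSchedule_a}}$, i.e.\ the analogue of \Cref{lemma:3} for the combined dynamic schedule.

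That lemma is where the main difficulty lies. Since the combined $\dynamicSchedule_{\staticSchedule_a}$ has no $\scheduleNo$ branch, I only have to discharge its two $\scheduleYes$ cases. The priority case is \Cref{lemma:3} essentially unchanged: if $\task$ is forced because some \emph{non-overdue}, strictly-lower-priority $\task'$ is satisfied at $t+1$ and no superset of $\task$ is, then since $\le_{DP}$ breaks ties among non-overdue tasks by priority, the greedy fill that reached $\task'$ must already have committed the inputs of $\task$, and tasks pulled in only through $\task_1\cup\task_2$-closure inherit a priority under $\staticSchedule_a$ at least that of $\task_1$ and $\task_2$, so no conflict arises. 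The overdue case is new: I must show that if $\task$ would be overdue at $t+1$ while some $\task'$ that is \emph{not} overdue at $t+1$ is satisfied at $t+1$, then $\task$ too is satisfied at $t+1$. The lever is that $\le_{DP}$ enumerates all overdue tasks before every non-overdue one, and that the scheduler's helper streams $od(\cdot),v(\cdot),le(\cdot)$ are evaluated inside $\world'$ itself --- just as \Cref{lemma:1,lemma:2} tie the $DS$ bookkeeping to the evaluation model --- so the overdue status the scheduler sorts on coincides with the one in $\dynamicSchedule_{\staticSchedule_a}$. The delicate step, which I expect to be the crux, is the subset-closure in $\takeNextEvent_B$: when the greedy fill halts part-way through the overdue tasks it returns $\{\task'' : \task'' \subseteq cur\}$, and one must rule out that this contains a non-overdue task while omitting an overdue one. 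I would argue this from $b \ge \max_{\task}|\task|$ together with the fact that $\overdue(\cdot,\world',t+1)$ is itself downward-closed along $\subseteq$ (if a subtask is stale the supertask is overdue, but then so is that subtask), so the overdue tasks and the set returned by $\takeNextEvent_B$ are both $\subseteq$-downward closed, which forces any returned non-overdue $\task'$ to have all its inputs contributed by strictly more urgent tasks, and hence every overdue task contained in those contributions to be returned as well. With this lemma, $\world'\in\semantics{\dynamicSchedule_{\staticSchedule_a}}$ holds, all three conjuncts of $\semantics{(\varphi,\dynamicSchedule_{\staticSchedule_a},B)}$ are discharged, and Condition~1 --- hence validity of $DPS_{\staticSchedule_a,B}$ --- follows.
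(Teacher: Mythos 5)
Your skeleton is exactly the paper's: its proof of this theorem is a two-line sketch that defers to the priority-scheduler argument (\Cref{theorem:valid-priority-scheduler}, hence to \Cref{lemma:3} and \Cref{theorem:valid-task-set}) and states that $\overdue$ merely introduces one additional priority level above all others, reflected in $\le_{DP}$ and in the extra $\scheduleYes$ case of the combined dynamic schedule. Your treatment of Condition~2, your construction of $\world'$ with $\world'[..t]=\world[..t]$ and $\world'[t+1]=i$, the appeal to well-definedness and to \Cref{theorem:valid-task-set}, and the split of the \Cref{lemma:3}-analogue into an unchanged priority case plus a new overdue case all coincide with that plan, and you correctly locate the only genuinely new obligation.

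The concrete problem is the lever you propose for that new obligation: the claim that $\overdue(\cdot,\world',t+1)$ is downward-closed along $\subseteq$ is false as stated. By definition $\overdue(\task,\world,t)$ holds as soon as \emph{some} subtask of $\task$ is stale beyond $dl_{\task}$, so $\set{i_1,i_2}$ can be overdue because $i_2$ is stale while the subtask $\set{i_1}$ was refreshed recently and is not overdue; your parenthetical only shows that the stale subtask itself is overdue, and even that needs the deadlines to be uniform or monotone across tasks. Consequently your argument does not exclude the configuration it is meant to exclude: if two overdue tasks have disjoint inputs whose union exceeds $b$, $\takeNextEvent_B$ stops after the first, the subset-closure of $cur$ that it returns can contain fresh, non-overdue tasks which are then satisfied at $t+1$, and the first $\scheduleYes$ case of $\dynamicSchedule$ demands satisfaction of the omitted overdue task, so the constructed $\world'$ would fall outside $\semantics{\dynamicSchedule_{\staticSchedule_a}}$. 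This is precisely the pile-up phenomenon that \Cref{theorem:valid-deadline-scheduler} rules out with the hypothesis $dl>n$, and the deadline-priority statement carries no analogous hypothesis on the overdue deadlines $dl_{\task}$; to close your lemma you would need either such a spacing condition or an argument that prefixes admissible under \Cref{def:valid_scheduler} can never contain simultaneously overdue tasks exceeding the bound. In fairness, the paper's own sketch is silent on exactly this point, so your write-up matches its structure and merely makes the implicit step visible --- but the specific justification you give for it does not go through as written.
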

\begin{proof}
    The proof follows the same structure as the proof of \Cref{theorem:valid-deadline-priority-scheduler} but adds with the $\mathit{overdue}$ function an additional priority that is higher than the other priorities. This is reflected in the semantics and in the scheduler
\end{proof}

}{
}

\end{document}